\newtheorem{theorem}{Theorem}[section]
\newtheorem{lemma}[theorem]{Lemma}
\theoremstyle{definition}
\theoremstyle{remark}
\newtheorem{obs}{Observation}[section]
\newcommand*{\ind}{\text{ind}}
\title{Approximate counting using Taylor's theorem: \\ a survey}
\author{
Viresh Patel\footnote{School of Mathematical Sciences, Queen Mary, University of London. Email: \texttt{viresh.patel@qmul.ac.uk}. } \and Guus Regts\footnote{Korteweg de Vries Institute for Mathematics, University of Amsterdam. Email: \texttt{guusregts@gmail.com}. Funded by the Netherlands Organisation of Scientific Research (NWO): VI.Vidi.193.068}}
\begin{document}

\maketitle
\begin{abstract}
In this article we consider certain well-known polynomials associated with graphs including the independence polynomial and the chromatic polynomial. These polynomials count certain objects in graphs: independent sets in the case of the independence polynomial and proper colourings in the case of the chromatic polynomial. They also have interpretations as partition functions in statistical physics. 

The algorithmic problem of (approximately) computing these types of polynomials has been studied for close to 50 years, especially using Markov chain techniques. Around eight years ago, Barvinok devised a new algorithmic approach based on Taylor's theorem for computing the permanent of certain matrices, and the approach has been applied to
various graph polynomials since then. 
This article is intended as a gentle introduction to the approach as well as a partial survey of associated techniques and results.

\noindent \begin{footnotesize}
Keywords: approximate counting, independence polynomial, complex zeros, chromatic polynomial.
\end{footnotesize}
\end{abstract}

\section{Introduction}
Computational counting is an area of theoretical computer science, which, at its heart, is concerned with the computational problem of counting certain structures inside some combinatorial object given as input. Think of counting the number of satisfying assignments of some logical formula, or the number of independent sets in a graph.
Often the structures to be counted have some natural weighting and one is interested in the weighted count. 

In this article, we focus on graph counting problems and in particular on finding efficient algorithms for (approximately) counting  objects of interest  inside some input graph.
The counting problems we consider here are ones where the number of objects to be counted is typically super-polynomial in the size of the graph and cannot be directly enumerated in polynomial time.
For example, the number of independent sets of a graph can be exponentially large in the size of the graph\footnote{As a contrasting example the number of triangles of a graph is polynomial in the size of the graph and can be enumerated by brute force in polynomial time. Of course it is interesting to know whether there is an algorithm for counting triangles that is better than using brute force, but we do not pursue this here.} and indeed, the problem of (approximately) counting independent sets of a graph is a rich area of research (here an independent set in a graph is a subset of vertices no two of which are adjacent). 
 The problem of exact counting is often computationally hard (this is the case for independent sets~\cite{Roth,Vadhan,DyerGreenhill}) so one is usually interested in approximation algorithms for counting problems. A notable exception is the problem of counting spanning trees of a graph. The number of spanning trees is typically exponential in the size of the graph, but spanning trees can be counted in polynomial time via the matrix tree theorem~\cite{JBook}. Throughout the article we use the example of counting independent sets in graphs to illustrate the various ideas we discuss. 
  In fact the ideas apply more generally for counting many other graph theoretic objects including trees, matchings, cuts, and proper colourings (we discuss proper colourings towards the end of the article).


The basic combinatorial counting problems are often not treated directly, but are considered in more generality by examining their corresponding generating functions. For example, for independent sets, one is interested in the independence polynomial, which for a graph $G=(V,E)$, is defined to be the polynomial 
\[
Z_G(\lambda) 
:= \sum_{S \subseteq V \text{independent}} \lambda^{|S|}
= \sum_{k \geq 0} \alpha_k \lambda^k,
\]
where $\alpha_k = \alpha_k(G)$ is the number of independent sets of size $k$ in $G$. This polynomial encodes a lot of information about the (sizes) of independent sets in $G$. For example it is easy to see that $Z_G(1)$ gives the number of independent sets in $G$ and $Z_G'(1)/Z_G(1)$ gives the average size of an independent set in $G$. Knowing the value of $Z_G(\lambda)$ for very large $\lambda$ would allow one to extract the degree of the polynomial i.e.\ the size of the largest independent set (which is known to be $NP$-hard to compute and even to approximate within a constant factor). This already tells us we should not expect to be able to efficiently approximate the independence polynomial at all values of $\lambda$. 

In this article, we describe a recent technique, the so-called Taylor polynomial interpolation method of Barvinok (first introduced in \cite{Barper}), for designing approximation algorithms for computational counting problems. Our aim here is to introduce the reader to the ideas behind the method and to give a flavour of the mathematics involved. We do not intend to give a complete survey of results that use the technique and nor do we fully formalise all of the ideas we present.
For the latter, we refer the reader to the excellent book of Barvinok~\cite{Barbook} and to~\cite{PatReg17}.

One distinguishing feature of the Taylor interpolation method is that, as well as its applications to ordinary counting problems, it also applies to evaluations of generating functions at negative and complex numbers. This is in contrast to earlier techniques. One motivation for understanding such complex evaluations is in quantum computing~\cite{Welshbook,BFLWquantum,mann2019approximation,GGHPotts}, although we will not discuss this here. Another is that complex evaluations are sometimes useful for real counting problems (see e.g.~\cite{realholant}), and perhaps most importantly, broadening our perspective to the complex plane gives a deeper understanding of the underlying computational complexity of various counting problems (see Section~\ref{se:conc} for more discussion on this).


Other techniques for designing approximate counting algorithms (which we will not discuss) include the Markov chain Monte Carlo method (see Jerrum~\cite{JBook} for an excellent introduction to the area) as well as the correlation decay method first introduced by Weitz~\cite{Weitz} and Bandyopadhyay and Gamarnik~\cite{BanGar} (see e.g.\ Chapters 5 and 6 of~\cite{Barbook} for an introduction). 
A very recent technique, closely related to Barvinok's interpolation method, is based on the cluster expansion from statistical physics and has been introduced by Jenssen, Keevash and Perkins~\cite{JKP20}. We say a few words about this at the end of Section~\ref{se:zerofree}.

\subsection{Connection to statistical physics}
The generating functions for the counting problems we encounter are often studied in the statistical physics community (using different terminology). For example the independence polynomial is known as the partition function of the hard-core model in statistical physics. The hard-core model is a model for gases. Given a closed container of a gas at equilibrium consider examining the gas in a small region of space inside the container. The (discretised) space in the region is represented by a grid graph, where vertices of the graph represent points in space. Each such point can either be occupied or unoccupied by a gas molecule but adjacent points in space cannot both be occupied due to repulsive forces between the molecules. Therefore, at any moment in time, the gas molecules can only occupy an independent set in the grid. The probability $\mathbb{P}(S)$ that at any moment in time the occupied points form a particular independent set $S$ of the grid is proportional to $\lambda^{|S|}$, where $\lambda \in [0, \infty)$ is a temperature-like parameter often called the \emph{fugacity}. A high temperature corresponds to a small value of $\lambda$, which, as we intuitively expect, makes it less likely that we see a large set $S$ of occupied points in our small region of space. Since $\mathbb{P}(S) \propto \lambda^{|S|}$, and $\sum_{S\subseteq V \text{ independent}} \mathbb{P}(S) = 1$, we see that $\mathbb{P}(S) = \lambda^{|S|} / Z_G(\lambda)$. Here we see the independence polynomial $Z_G(\lambda)$ (known here as the partition function of the hard-core model) appearing as the normalising constant in the probability. Again, this partition function is much more than just a normalising constant, and encodes a lot of physical information about the system. For example, by considering the limiting behaviour of $\ln Z_G(\lambda)/|V(G)|$ and its derivatives for larger and larger graphs (usually grids), discontinuities of these limit functions give information about phase transitions in the system, that is, sharp changes in the physical parameters associated with the system indicating a qualitative change in the system.
We direct the reader to \cite{FVbook} for a comprehensive and rigourous mathematical treatment of phase transitions for many models and to \cite{ScottSokal,Dobrushin96,GauntFisher}  for more on the hard-core model.
We will not be concerned directly with the statistical physics, but some results originally proved by statistical physicists will be used in the algorithmic approach we describe.

\subsection{Preliminaries} 
\label{se:prelim}
We have already mentioned the independence polynomial as an example of a graph polynomial that we may wish to approximate. The independence polynomial will serve as a running example throughout the article to illustrate various ideas. Here we mention a few basic properties of the independence polynomial to give the reader a feel for this object.

Recall that $Z_G(\lambda) = \sum \lambda^{|S|}$, where the sum is over all independent sets $S$ of $G$. The first easy but important fact to note is that the empty set is an independent set, so $Z_G(0)$ (i.e.\ the constant term in the polynomial) is always $1$. Another important fact is that the independence polynomial is multiplicative, that is $Z_{G_1 \cup G_2}(\lambda) = Z_{G_1}(\lambda)Z_{G_2}(\lambda)$, where we write $G_1 \cup G_2$ for the disjoint union of the graphs $G_1$ and $G_2$. This is because every independent set $S$ of $G_1 \cup G_2$ can be written uniquely as $S = S_1 \cup S_2$, where $S_i$ is an independent set of $G_i$. Therefore $\lambda^{|S|} = \lambda^{|S_1|}\lambda^{|S_2|}$, which allows us to factorise the sum. Using this multiplicative property, we also see, for example, that the independence polynomial of $k$ isolated vertices is $(1+ \lambda)^k$. One can also see directly that the complete graph on $k$ vertices has independence polynomial $1 + k\lambda$.


We now describe the type of algorithm we ideally wish to obtain for our graph counting problems. Suppose $p = p(G)$ is a graph parameter, e.g.\ $p(G)$ is the number of independent sets in $G$, or $p(G) = Z_G(\lambda)$ for some fixed $\lambda$. Note that we allow $p(G)$ to be a complex number. A \emph{fully polynomial-time approximation scheme} (or FPTAS for short) for $p$ is an algorithm that takes as input a graph $G$ and an error tolerance $\varepsilon >0$ and outputs a (complex) number $N$ such that $N = e^{\varepsilon t}p(G)$ for some $t \in \mathbb{C}$ with $|t| \leq 1$ in time polynomial in $|G|$ (the number of vertices of $G$) and $\varepsilon^{-1}$. 
Note that when $\varepsilon$ is small, we have $N = e^{\varepsilon t}p(G) \approx (1 + \varepsilon t)p(G)$, so that $N$ is roughly within a distance $\varepsilon |p(G)|$  of the true value of $p(G)$. For this reason we call such output $N$ a multiplicative $\varepsilon$-approximation (for $p(G)$).
\footnote{Note that this definition of FPTAS is consistent with the usual notion of FPTAS for real parameters. 
} 
We also discuss algorithms that provide the same approximation as above but that run in time super-polynomial in $|G|$.


\section{Barvinok's interpolation method}
\label{se:Barv}

In this section we describe the Taylor polynomial interpolation method of Barvinok, a method that can be applied to a wide variety of counting problems. Consider some graph polynomial, that is, each graph $G$ has some associated polynomial $P(z)= P_G(z)$. As with the independence polynomial, we should imagine that $P_G$ is not directly accessible, i.e.\ at least some of its coefficients are difficult to compute from $G$. We will however assume that the degree of the polynomial $P_G$ is always bounded by a constant times $|G|$; this is certainly the case for the independence polynomial and is easy to verify for most graph polynomials one might consider. Our goal is to (efficiently) obtain a multiplicative $\varepsilon$-approximation for $P_G(z)$ for $z \in \mathbb{C}$. 

The insight of Barvinok was to use Taylor's theorem, about power series approximations of smooth functions, to obtain the desired approximation. At first sight we seem to gain nothing from Taylor's theorem because the Taylor series of a polynomial is simply the polynomial itself. However, notice that the truncated Taylor series of a (non-polynomial) function gives an \emph{additive} $\varepsilon$-approximation to the function, whereas we are interested in a \emph{multiplicative} $\varepsilon$-approximation. Therefore, rather than considering the Taylor series of $P_G(z)$, we should in fact consider the Taylor series of $g(z) := \ln P_G(z)$ 
and then take the exponential of the result to obtain the desired approximation.\footnote{In order for $g(z)$ to be well-defined we need to fix a branch of the logarithm here; we say more below.}

To this end, consider the Taylor series of $g(z)$ about zero:
\[
g(z) = \sum_{k = 0}^{\infty} \frac{g^{(k)}(0)}{k!} z^k,
\]
where $g^{(k)}$ denotes the $k$th derivative of $g$. Unfortunately, the Taylor series of a function does not usually converge for all $z \in \mathbb{C}$. We will return shortly to the question of convergence, but let us assume for now that the Taylor series does converge to $g(z)$ for a value of $z$ we are interested in. In that case, if we write $T_m(z)$ for the first $m$ terms of the Taylor series of $g$ above, then for $m$ sufficiently large, we will have that $|T_m(z) - g(z)| < \varepsilon$, i.e.\ $T_m(z) = g(z) + \varepsilon t$ for some $t \in \mathbb{C}$ with $|t|<1$. Taking the exponential of both sides of the equation, we obtain $\exp(T_m(z)) =  \exp(\varepsilon t) P_G(z)$ i.e.\ $\exp(T_m(z))$ a multiplicative $\varepsilon$-approximation for $P_G(z)$.

This gives us the desired approximation, but several questions remain. Firstly, there is the question of convergence mentioned above. Secondly, if the Taylor series does converge, then how large does $m$ have to be to guarantee that $|T_m(z) - g(z)| < \varepsilon$? Finally, how can we actually compute $T_m(z)$ in order to compute our approximation $\exp(T_m(z))$ for $P_G(z)$?  Note that we do not have direct access to the numbers $g^{(k)}(0)$; these have to be computed in some way.

For the first question of convergence, Taylor's theorem says that the Taylor series for $g$ converges inside the disk $D_R := \{z \in \mathbb{C}: |z| \leq R \}$ for any $R > 0$ provided that $g$ is analytic inside $D_R$. In our case, this holds provided $P_G(z) \not= 0$ for all $z \in D_R$.\footnote{Formally, to ensure $g$ is analytic, we fix $\ln P_G(0)$, and take the branch of $g(z) =\ln P_G(z)$ on $D_R$ given by $g(z) = \ln P_G(0)+\int^z_{0}P_G'(w)/P_G(w) dw$.}
So the Taylor series will converge inside the largest disk that contains no roots of $P_G(z)$. Establishing such zero-freeness results for particular graph polynomials will be the subject of Section~\ref{se:zerofree}.

The second question concerns the rate of convergence of the Taylor series of $g$. Here we take advantage of the particular form of $g$ as the logarithm of a polynomial. If $\eta_1, \ldots, \eta_d$ are the (complex) roots\footnote{Again, we do not typically have access to the roots of $P_G$; we work with the roots only in the analysis of the algorithm.} of $P_G(z)$ then we can write $P_G(z) = a\prod_{i=1}^d(1 - \frac{z}{\eta_i})$, where $a = P_G(0)$ is assumed to be non-zero. Then taking logarithms of both sides, we have
\[
g(z) = \ln(a) + \sum_{i=1}^d \ln(1 - (z/\eta_i)).
\]
Using that the Taylor series of $\ln(1-z) = -z - \frac{z^2}{2} - \frac{z^3}{3} - \cdots$ for $|z|<1$, we obtain the Taylor series of $g$ as
\[
g(z) = \ln(a) - \sum_{i=1}^d \sum_{k = 1}^{\infty} \frac{(z/\eta_i)^k}{k}
\]
for $|z| < \min_i |\eta_i|$ (precisely the condition of zero-freeness mentioned above). Assuming $|z| \leq \delta \min_i |\eta_i|$ for some $\delta \in (0,1)$, this gives 
\[
|g(z) - T_m(z)| 
\leq \sum_{i=1}^d \sum_{k=m}^{\infty} \left| \frac{(z/\eta_i)^k}{k} \right| 
\leq \sum_{i=1}^d \sum_{k=m}^{\infty} \delta^k = \frac{d \delta^m}{1 - \delta}.
\]
In order to bound the last expression by $\varepsilon$, it is sufficient to take $m \geq C \ln(d / \varepsilon)$ for some constant $C$ depending on $\delta$. For such $m$ we have that $\exp(T_m(z))$ is a multiplicative $\varepsilon$-approximation for $P_G(z)$.

The final question of actually computing $T_m(z)$ is more subtle and will only be partially addressed here and in the next section. We will show that if we know the values of the first $m = C\ln(d/\varepsilon)$ coefficients of $P_G(z)$ then we can compute the derivatives $g^{(0)}, g^{(1)}, \ldots, g^{(m)}$ in time ${\rm poly}(m)$. However, we do not typically have immediate access to the first $m$ coefficients of our graph polynomials. For example, in the case of the independence polynomial $Z_G(\lambda)$, the coefficient $\alpha_k$ of $\lambda^k$ is the number of independent sets of size $k$ in $G$: computing this naively with a brute force approach of checking every $k$-tuple of vertices takes time $n^k$ (where $n = |G|$) and so computing $\alpha_m$ takes time $n^{m} = n^{O(\ln n)}$ (noting that the degree of $Z_G$ i.e.\ the size of the largest independent set could be and often is linear in $n$). In the next section, we show how to compute $\alpha_0, \ldots, \alpha_m$ in ${\rm poly}(n)$ time and the idea turns out to generalise for many other graph polynomials of interest. For now, here is how to compute $T_m(z)$ given the first $m$ coefficients of $P_G(z)$.

Suppose $P(z) = P_G(z) = a_0 + a_1z + \cdots + a_dz^d$. We defined $g(z) = \ln P_G(z)$. We know $g^{(0)}(0) = g(0) = \ln(a_0)$. If we differentiate once and rearrange, we obtain $g^{(1)}(z)P(z) = P^{(1)}(z)$. If we now repeatedly differentiate this expression, we obtain the following expressions:
\begin{align*}
P^{(1)} &= g^{(1)}P^{(0)} \\
P^{(2)} &= g^{(2)}P^{(0)} + g^{(1)}P^{(1)} \\
& \vdots \\
P^{(r)} &= g^{(r)}P^{(0)} + \binom{r-1}{1} g^{(r-1)}P^{(1)} + \binom{r-1}{2} g^{(r-2)}P^{(2)} + \cdots + \binom{r-1}{r-1}g^{(1)}P^{(r-1)}.
\end{align*}    
Evaluating these expressions at zero, and noting that $P^{(r)}(0) = r!a_r$, we obtain
\begin{align*}
a_1 &= a_0g^{(1)}(0) \\
2a_2 &= a_0g^{(2)}(0) + a_1g^{(1)}(0) \\
& \vdots \\
ra_r &= a_0g^{(r)}(0) + \frac{(r-1)!}{(r-1)!}a_1 g^{(r-1)}(0) + \frac{(r-1)!}{(r-2)!}a_2 g^{(r-2)}(0) + \cdots + \frac{(r-1)!}{1!}a_{r-1}g^{(1)}(0).
\end{align*}    
We see that if we know $a_0, \ldots, a_r$ and we have computed $g^{(0)}(0), \ldots, g^{(r-1)}(0)$, then we can use the $r$th equation above to compute $g^{(r)}(0)$ in time $O(r)$. Therefore given $a_0, \ldots, a_m$, we can compute $T_m(z)$ in $O(m^2)$ time.

The following summarises what we have shown in this section and is the essence of the Taylor polynomial interpolation method.

\begin{theorem}
\label{thm:Barvinok}
Suppose $\mathcal{G}$ is an (infinite) set of graphs and for each $G \in \mathcal{G}$, $P_G(z)$ is a polynomial associated with $G$, where 
\[
P_G(z) = \sum_{i=0}^{d(G)} a_i(G)z^i
\]
 Suppose there exists $R>0$ and a function $T: \mathbb{N} \times \mathbb{N} \rightarrow \mathbb{N}$ with the properties that
\begin{itemize}
    \item[(i)] $P_G(z) \not= 0$ whenever $|z| \leq R$ for all graphs $G \in \mathcal{G}$, and
    \item[(ii)] we are able to compute $a_i(G)$ in time bounded by $T(|G|, i)$, where we assume for convenience that $T$ is non-decreasing in both arguments.
\end{itemize}
Then there is an algorithm, which, given input $G \in \mathcal{G}$, $\varepsilon >0$, and $z \in \mathbb{C}$ with $|z| < R$, computes a multiplicative $\varepsilon$-approximation of $P_G(z)$ in time 
$mT( n, m ) + O(m^2)$, where $n = |G|$ and $m := C \ln(d(G) / \varepsilon)$ (as defined earlier).
\end{theorem}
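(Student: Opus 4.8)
The plan is to assemble the three ingredients developed above in this section --- convergence of the Taylor series of $g(z) = \ln P_G(z)$ inside a zero-free disk, the explicit rate-of-convergence estimate, and the recursion expressing $g^{(r)}(0)$ through the coefficients $a_0(G), \dots, a_r(G)$ --- into a single algorithm, and then to bound its running time. I do not expect any genuinely new idea to be needed: the work is in organising what is already there and being careful with the time accounting.

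For correctness I would argue as follows. Given the input $z$ with $|z| < R$, set $\delta := |z|/R \in (0,1)$; since $z$ is part of the input, $\delta$ is determined by it, and the constant $C = C(\delta)$ below depends only on $\delta$. By hypothesis~(i), $P_G$ has no zero in the closed disk $D_R$, so every root $\eta_i$ of $P_G$ satisfies $|\eta_i| > R$ and therefore $|z/\eta_i| < \delta$. Writing $T_m$ for the degree-$m$ Taylor truncation of $g$ at $0$ (with the branch of $\ln P_G$ fixed as above), the estimate established earlier in this section, obtained by factoring $P_G(z) = a_0(G)\prod_i(1 - z/\eta_i)$ and expanding each $\ln(1 - z/\eta_i)$, gives
\[
|g(z) - T_m(z)| \;\le\; \sum_{i=1}^{d(G)} \sum_{k \ge m} \frac{|z/\eta_i|^{k}}{k} \;\le\; \frac{d(G)\,\delta^{m}}{1-\delta}.
\]
A one-line computation then shows that $m := C \ln(d(G)/\varepsilon)$, for $C$ a large enough constant depending only on $\delta$, makes the right-hand side at most $\varepsilon$. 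Hence $T_m(z) = g(z) + \varepsilon t$ with $|t| \le 1$, and exponentiating (using $e^{g(z)} = P_G(z)$) shows that $N := \exp(T_m(z)) = e^{\varepsilon t} P_G(z)$ is the required multiplicative $\varepsilon$-approximation.

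For the running time, the algorithm I have in mind computes $N$ in three stages. First it computes $a_0(G), \dots, a_m(G)$ (setting $a_i(G) = 0$ for $i > d(G)$, at no cost); by hypothesis~(ii) and the monotonicity of $T$ this takes at most $\sum_{i=0}^{m} T(n,i) \le (m+1)\,T(n,m)$ steps. Second it computes $g^{(0)}(0), \dots, g^{(m)}(0)$ from $g^{(0)}(0) = \ln a_0(G)$ --- well-defined since $a_0(G) = P_G(0) \neq 0$ by~(i) --- and the recursion
\[
g^{(r)}(0) \;=\; \frac{1}{a_0(G)}\Bigl( r\,a_r(G) - \sum_{j=1}^{r-1} \frac{(r-1)!}{(r-j)!}\, a_j(G)\, g^{(r-j)}(0) \Bigr) \qquad (1 \le r \le m),
\]
derived above, each step costing $O(r)$ operations, for $O(m^2)$ in total. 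Third it evaluates $T_m(z) = \sum_{k=0}^{m} \frac{g^{(k)}(0)}{k!}\, z^{k}$ and returns $N = \exp(T_m(z))$, a further $O(m)$ operations. Summing, the running time is $(m+1)\,T(n,m) + O(m^2)$, which is of the claimed form $m\,T(n,m) + O(m^2)$.

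The step I expect to be (the only thing resembling) an obstacle is not really inside this proof at all: it is verifying the two hypotheses for concrete polynomials. Hypothesis~(i), uniform zero-freeness of $P_G$ on a fixed disk over all $G \in \mathcal{G}$, is the subject of Section~\ref{se:zerofree}; hypothesis~(ii), computing the first $\approx \ln(d(G)/\varepsilon)$ coefficients of $P_G$ in polynomial time, is the genuinely non-trivial point taken up in the next section for the independence polynomial. Within the proof itself the only care needed is the time bookkeeping in the first stage, where monotonicity of $T$ lets one replace $\sum_{i \le m} T(n,i)$ by $(m+1)T(n,m)$; and, as is standard in this setting, we work in a model of exact arithmetic over $\mathbb{C}$, evaluating $\ln$ and $\exp$ to sufficiently many bits, which adds only a polylogarithmic overhead and does not affect the stated bound.
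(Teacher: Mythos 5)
Your proof is correct and follows exactly the same route as the paper's: the zero-freeness gives the tail bound $|g(z)-T_m(z)| \le d\delta^m/(1-\delta)$, the choice $m = C\ln(d/\varepsilon)$ makes this at most $\varepsilon$, the recursion computes $g^{(0)}(0),\dots,g^{(m)}(0)$ from $a_0,\dots,a_m$ in $O(m^2)$ time, and hypothesis (ii) with monotonicity of $T$ accounts for the $mT(n,m)$ term. Nothing is missing; the paper presents these ingredients in the running text before stating the theorem and treats the theorem itself as a summary, which is precisely how you have assembled them.
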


Some remarks are in order. The theorem is formulated for a general class of graphs $\mathcal{G}$ rather than all graphs because often, we are only able to establish conditions (i) and (ii) effectively for certain types of graphs (typically bounded degree graphs). This is best illustrated by applying the result above to the independence polynomial.

For the independence polynomial, if we consider $\mathcal{G}$ to be the set of all graphs, then there is no zero-free disk of positive radius\footnote{The $k$-vertex complete graph has independence polynomial $1 + k\lambda$, so its roots tend to $0$.} so we can only take $R=0$ in condition (i). However, if we restrict $\mathcal{G}$ to graphs of maximum degree $\Delta$, we will see in Section~\ref{se:zerofree} that we can take $R = (\Delta - 1)^{\Delta - 1} / \Delta^{\Delta}$. Similarly for condition (ii), if we take $\mathcal{G}$ to be all graphs, then the brute force approach mentioned earlier is essentially the only way to compute coefficients of $Z_G(\lambda)$ giving $T(n,k) = O(n^k)$: overall this gives a super-polynomial running time of $mT(n,m) + O(m^2) = n^{O(m)} + O(m^2) = n^{O(\ln(n/\varepsilon))}$.
Such a quasi-polynomial running time is already quite promising because it is significantly better than the exponential running time of a brute-force algorithm.
However, in the next section we will see that for graphs of maximum degree at most $\Delta$, we can compute the coefficients much faster and take $T(n,k) = {\rm poly}(n) \Delta^{O(k)}$, thereby establishing an overall polynomial running time of $T(n,m) = {\rm poly}(n) \Delta^{O(m)} = {\rm poly}(n)\Delta^{O(\ln(n / \varepsilon))} = (n / \varepsilon)^{O(\ln \Delta)} $. Combining the results from the next two sections will therefore give an FPTAS for computing $Z_G(\lambda)$ on graphs of maximum degree at most $\Delta$ provided $|\lambda| < (\Delta - 1)^{\Delta - 1} / \Delta^{\Delta}$.

Finally, we remark that one can in fact relax condition (i) to include  regions that are not necessarily disks provided the region is ``thick'' in a certain sense and contains the point $0$. 
Concretely one should think of a small neighbourhood of a real interval or a sector region. Relaxing condition (i) to non-disk regions is achieved by making suitable polynomial transformations of $P_G$; see Section 2.2.2 of \cite{Barbook} and~\cite{Barrealrooted} for details.




\section{Polynomial running time for bounded degree graphs}
\label{se:polytime}

In the last section we saw how we can use Taylor's theorem to design algorithms to approximate graph polynomials. Let $P = P_G$ be a graph polynomial. Examining Theorem~\ref{thm:Barvinok}, we require two ingredients to establish an approximation algorithm to compute $P_G(z)$. First we need to establish a zero-free disk for $P_G$; this will be discussed in detail in the next section. Second, we need to be able to efficiently compute the first $O(\ln |G|)$ coefficients of $P_G$, which we discuss in detail here. There is usually a straightforward, direct approach for computing these coefficients, which leads to quasi-polynomial time algorithms, but which is not fast enough for an FPTAS. 
We already saw this in the last section with the independence polynomial, where we saw that computing the coefficients naively leads to an $n^{O(\ln n)}$-time algorithm.\footnote{Computing the coefficients naively often gives us quasi-polynomial time approximation algorithms as with the example of the independence polynomial. This is already very good because it is a significant improvement on the exponential time taken to enumerate independent sets in a graph. Achieving the polynomial runtime of an FPTAS is considered to be the gold standard in the area.} In this section, we show how to compute the coefficients of the independence polynomial more efficiently for {\it bounded degree graphs}. The technique generalises to many other graph polynomials but all the key ideas are best understood through the concrete example of the independence polynomial. We give the statement for general graph polynomials at the end of the section. 

It is worth noting that, barring a few exceptions, the setting of bounded degree graphs is often the setting of interest. For example, the problem of computing a multiplicative $\varepsilon$-approximation for $Z_G(z)$ is known to be computationally hard for all complex $z \not= 0$ if we have no restriction on $G$; see~\cite{BeyondlambdacSlyandSun,BeyondlambdacGalanisetal,BGGS20}.

\subsection{Computing the coefficients of the independence polynomial efficiently}

Recall that the independence polynomial $Z_G(\lambda)$ is given by 
\[
Z_G(\lambda) = \sum_{k \geq 0} \alpha_k \lambda^k,
\]
where $\alpha_k = \alpha_k(G)$ is the number of independent sets of size $k$ in $G$. Throughout this section we focus on bounded degree graphs and write $\mathcal{G}_{\Delta}$ for the set of graphs of maximum degree at most $\Delta$. If we apply Theorem~\ref{thm:Barvinok} to $Z_G(\lambda)$ (with $G \in \mathcal{G}_{\Delta}$), assuming we have some suitable zero-free disk containing $\lambda$, Theorem~\ref{thm:Barvinok} gives us an algorithm to compute a multiplicative $\varepsilon$-approximation of $Z_G(\lambda)$ in time $mT(n,m) + O(m^2)$, where $T(n,i)$ is the time needed to compute $\alpha_i(G)$ for $n$-vertex graphs $G \in \mathcal{G}_{\Delta}$ and $m \leq C\ln(n / \varepsilon)$. We will sketch a proof of the following.

\begin{theorem}
\label{thm:fastind}
For $G \in \mathcal{G}_{\Delta}$, we can compute $\alpha_i(G)$ in time ${\rm poly}(|G|) \Delta^{O(i)}$, i.e.\ for $n$-vertex graphs of maximum degree at most $\Delta$, we can take $T(n,i) = {\rm poly}(n) \Delta^{O(i)}$.
\end{theorem}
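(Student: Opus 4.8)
The plan is to count independent sets of size $i$ by a local counting argument that exploits the bounded-degree structure, rather than brute-force enumeration over all $\binom{n}{i}$ candidate sets. The key observation is that every independent set $S$ of size $i$ in a connected graph lies inside a connected subgraph on at most a bounded-times-$i$ number of vertices; more precisely, any independent set $S$ can be ``connected up'' by noting that in a connected graph $G$ the vertices of $S$ together with appropriately chosen short paths between them span a connected subgraph with $O(i)$ vertices. So the first step is to reduce to counting, for each vertex $v$, the independent sets of size $i$ whose ``closure'' (in a suitable sense) contains $v$, and to observe that such sets live in a ball-like neighbourhood of bounded size.

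The cleanest way I would organise this is via the standard connected-subgraph enumeration lemma for bounded-degree graphs: in a graph of maximum degree $\Delta$, the number of connected subgraphs on $k$ vertices containing a fixed vertex $v$ is at most $(e\Delta)^{k}$, and they can all be enumerated in time $\mathrm{poly}(n)\Delta^{O(k)}$ (a BFS/DFS-style branching search from $v$). First I would enumerate all connected vertex subsets $W$ of size at most $(\Delta+1)i$ or so that contain $v$; for each such $W$ I directly count the independent sets of size exactly $i$ contained in $W$ (this costs $\mathrm{poly}(|W|)$ per set, or one can even afford $2^{|W|}$ since $|W| = O(i)$). Then I would use inclusion–exclusion / a careful bookkeeping over a canonical choice of ``witness'' connected subgraph for each independent set so that each set of size $i$ is counted exactly once across all $v$ and all $W$. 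Equivalently, one can observe $Z_G(\lambda) = \prod_{\text{components}} Z_{G_j}(\lambda)$, reduce to connected $G$, and then extract $\alpha_i$ from the fact that the ``connected/cluster'' generating function has coefficients supported on small subgraphs — this is exactly the cluster-expansion viewpoint hinted at in the introduction.

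The main obstacle — and the step that needs real care rather than hand-waving — is the overcounting/canonicalisation: making sure that summing local counts over all vertices $v$ and all small connected subgraphs $W$ recovers $\alpha_i(G)$ exactly, with no dependence on $n$ hidden in the error. The honest route is to define, for each independent set $S$, a unique minimal connected subgraph containing $S$ (say, lexicographically least among minimum-size connected subgraphs spanning $S$), prove its size is $O_\Delta(i)$, and then count pairs $(S, W)$ where $W$ is exactly this canonical witness; this removes the overcounting. The bound on $|W|$ uses that $G$ is connected and $\Delta$-bounded: greedily linking the $i$ vertices of $S$ by shortest paths, one path at a time, adds at most $|S|-1$ paths, but the diameter bound is what could blow up — so instead one argues that a minimal connected Steiner-type subgraph on $i$ terminals in {\it any} graph has at most $2i$ vertices if it is a tree on the terminals plus at most $i-2$ branch vertices (a minimal Steiner tree on $i$ terminals has at most $i-2$ internal vertices of degree $\ge 3$ and its subdivided edges can be contracted for counting purposes), giving a structural object of size $O(i)$ whose realisations in $G$ number $\mathrm{poly}(n)\Delta^{O(i)}$. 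Assembling these pieces yields $T(n,i) = \mathrm{poly}(n)\,\Delta^{O(i)}$ as claimed.
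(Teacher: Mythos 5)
Your central structural claim --- that every independent set $S$ of size $i$ in a connected bounded-degree graph lies inside (or has a canonical witness that is) a connected subgraph on $O(i)$ vertices --- is false, and it is a load-bearing claim. In a path $P_n$, the independent set consisting of the two endpoints has as its unique minimal connected spanning subgraph the entire path, on $n$ vertices. More generally, in any bounded-degree graph the number of independent sets of size $2$ is $\Theta(n^2)$, and most such pairs are at distance $\Theta(n)$; no scheme that assigns each size-$i$ independent set to a size-$O(i)$ connected witness can account for these. Your attempted patch --- that a minimal Steiner tree on $i$ terminals has at most $i-2$ branch vertices, so its long paths ``can be contracted for counting purposes'' --- does not rescue the argument: once you contract, you are no longer enumerating subgraphs of $G$, and the number of embeddings of a contracted pattern into $G$ with long connecting paths grows with $n$ (indeed, the pattern with a single contracted edge already has $\Theta(n^2)$ realizations in $P_n$), so the runtime bound $\mathrm{poly}(n)\Delta^{O(i)}$ is not recovered. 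The coefficient $\alpha_i(G)$ is not a local/additive quantity in any sense that supports direct local enumeration.

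The paper takes a genuinely different route precisely to get around this obstruction. It does \emph{not} try to localize the $\alpha_i$ directly. Instead it passes through the power sums $p_t$ of the inverse roots of $Z_G$. Because $Z_G$ is multiplicative over disjoint unions, the $p_t$ are \emph{additive} graph parameters; a theorem of Csikv\'ari and Frenkel (Observation~\ref{obs:2} in the text) says that any additive parameter expressible in induced-subgraph counts is in fact a linear combination of $\ind(H,G)$ over \emph{connected} $H$, and connected induced counts can be enumerated in time $\mathrm{poly}(n)\Delta^{O(|H|)}$ (Observation~\ref{obs:1}, which is the one piece your proposal correctly has). Finally the Newton identities invert the map from power sums back to the elementary symmetric polynomials $\alpha_i$. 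The essential idea you are missing is this logarithmic change of variables from coefficients to power sums: it is the multiplicativity of $Z_G$, not the geometry of independent sets, that produces the required locality.
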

Using the theorem above, Theorem~\ref{thm:Barvinok} gives us an approximation algorithm for the independence polynomial with running time 
\[
mT(n,m) + O(m^2) = {\rm poly(n)} \Delta^{O( \ln(n / \varepsilon)} = {\rm poly(n)}(n/ \varepsilon)^{O(\ln( \Delta))}.
\]
We see that this running time is of the form required for a fully polynomial time approximation scheme.
We now sketch the proof of Theorem~\ref{thm:fastind}.

\subsubsection*{Sketch proof of Theorem~\ref{thm:fastind}}
\label{se:sketchproof}
We begin by generalising the algorithmic problem we are interested in. We are interested in computing $\alpha_k(G)$, the number of independent sets of size $k$ in $G$, when $G \in \mathcal{G}_{\Delta}$. Equivalently, $\alpha_k(G)$ is the number of induced copies of the graph $I_k$ in $G$, where $I_k$ is the graph consisting of $k$ vertices and no edges. Generally for graphs $H$ and $G$, write $\ind(H,G)$ for the number of induced copies\footnote{The number of induced copies of a graph $H$ in the graph $G=(V,E)$ is defined as the number of vertex subsets $S \subseteq V$ such that $G[S] = H$.} of $H$ in $G$. Then $\alpha_k(G) = \ind(I_k, G)$.

The first observation is that, while we do not know how to efficiently compute $\ind(I_k, G)$, it is not too hard to efficiently compute $\ind(H,G)$, when $H$ is connected.

\begin{obs}
\label{obs:1}
We can compute $\ind(H,G)$ in time ${\rm poly}(n) \Delta^{O(k)}$, where $G \in \mathcal{G}_{\Delta}$, $n = |G|$ and $k = |H|$.
\end{obs}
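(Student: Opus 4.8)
The plan is to reduce the problem of counting induced copies of $H$ to that of counting \emph{labelled} copies, and then to count the latter by a greedy branch-and-bound search that exploits the connectivity of $H$ together with the degree bound on $G$. For graphs $H$ and $G$, write $\mathrm{emb}(H,G)$ for the number of injective maps $\phi\colon V(H)\to V(G)$ preserving both edges and non-edges, i.e.\ with $\{u,w\}\in E(H)\iff\{\phi(u),\phi(w)\}\in E(G)$. Any such $\phi$ is an isomorphism from $H$ onto $G[\phi(V(H))]$, and for each fixed vertex set $S$ with $G[S]\cong H$ there are exactly $|\mathrm{Aut}(H)|$ such maps with image $S$; hence $\mathrm{emb}(H,G)=|\mathrm{Aut}(H)|\cdot\ind(H,G)$. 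Taking $G=H$ shows $|\mathrm{Aut}(H)|=\mathrm{emb}(H,H)\ge 1$, so
\[
\ind(H,G)=\frac{\mathrm{emb}(H,G)}{\mathrm{emb}(H,H)}.
\]
(If $H$ has a vertex of degree larger than $\Delta$ then $\ind(H,G)=0$ and we output $0$; otherwise $H\in\mathcal{G}_\Delta$, so it suffices to compute $\mathrm{emb}(H,G')$ in time ${\rm poly}(|G'|)\Delta^{O(k)}$ for every $G'\in\mathcal{G}_\Delta$.)

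To compute $\mathrm{emb}(H,G')$ I would enumerate candidate maps vertex by vertex. Since $H$ is connected, a breadth-first search from an arbitrary vertex produces an ordering $v_1,\dots,v_k$ of $V(H)$ in which every $v_i$ with $i\ge 2$ has a neighbour $v_{p(i)}$ with $p(i)<i$. We build maps $\phi$ by first choosing $\phi(v_1)$ (at most $|G'|$ options) and then, for $i=2,\dots,k$ in order, choosing $\phi(v_i)$ among the at most $\Delta$ neighbours of $\phi(v_{p(i)})$ in $G'$, discarding a partial map as soon as it fails to be injective or fails to preserve an edge or non-edge among the vertices placed so far; every completed map that survives is counted. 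This is correct: a valid induced embedding $\phi$ always satisfies $\phi(v_i)\in N_{G'}(\phi(v_{p(i)}))$ (it preserves edges) and has all of its restrictions injective and edge/non-edge preserving, so it is generated exactly once and nothing else is counted.

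The search tree then has at most $|G'|\Delta^{k-1}$ leaves, hence $O(|G'|\Delta^{k})$ nodes in total, and the consistency check at each node costs ${\rm poly}(k,\Delta)$ time (all relevant adjacency queries are cheap because degrees in $G'$ are at most $\Delta$); so $\mathrm{emb}(H,G')$ is computed in time ${\rm poly}(|G'|)\Delta^{O(k)}$, and running this with $G'=G$ and with $G'=H$ and dividing yields $\ind(H,G)$ in time ${\rm poly}(n)\Delta^{O(k)}$. I expect the only genuinely delicate point to be the reduction in the first paragraph: computing $|\mathrm{Aut}(H)|$ directly by ranging over all bijections would cost $k!$ and wreck the bound, so it is important to observe that $|\mathrm{Aut}(H)|=\mathrm{emb}(H,H)$ and is delivered by the very same subroutine. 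Connectivity of $H$ is equally essential — for a disconnected $H$ the search would branch $|G'|$ ways at the first vertex of each component, giving a bound of the form $|G'|^{c}$ with $c$ the number of components of $H$, rather than ${\rm poly}(|G'|)$.
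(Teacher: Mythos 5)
Your proof is correct and follows essentially the same approach as the paper: embed $H$ into $G$ one vertex at a time along a BFS/spanning-tree order, using connectivity of $H$ and the degree bound on $G$ to cap the branching factor at $\Delta$, giving at most $n\Delta^{k-1}$ candidate embeddings to check. You additionally make explicit the passage from labelled embeddings to unordered induced copies via $\ind(H,G)=\mathrm{emb}(H,G)/\mathrm{emb}(H,H)$, a normalisation the paper's sketch leaves implicit.
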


To see this, we first pick any spanning tree $T$ in $H$ (i.e.\ a subgraph of $H$ that is a tree and that uses all $k$ vertices of $H$). Such a spanning tree exists because $H$ is connected. The idea is to find all (not necessarily induced) copies of $T$ in $G$ and to check which of the copies of $T$ extend to an induced copy of $H$. This accounts for all induced copies of $H$ because every induced copy of $H$ in $G$ contains a (not necessarily induced) copy of $T$ in $G$.

There are only relatively few (not necessarily induced) copies of $T$ in $G$. Indeed, first we enumerate the vertices of $T$ in a breadth-first ordering $v_1, v_2, \ldots, v_k$. We embed $T$ into $G$ one vertex at a time in order. There are $n$ choices of where to embed $v_1$. Each subsequent vertex of $T$ has at most $\Delta$ possibilities for its embedding into $G$ because when we come to embed $v_i$, its parent in $T$ (say $v_{i'}$) has already been embedded as some vertex $x_{i'}$ in $G$, so the embedding of $v_i$ must be a neighbour of $x_{i'}$ in $G$. Therefore altogether there are at most $n \Delta^{k-1}$ embeddings of $T$ in $G$ and each such embedding of $T$ is checked to see if it gives an induced copy of of $H$. 

From Observation~\ref{obs:1}, we see that we can compute $\ind(H,G)$, when $H$ is connected, but the graph $H$ we are interested in, namely $I_k$, is very much disconnected. It would be useful if we could express $\ind(I_k, G)$ in terms of $\ind(H,G)$ for connected $H$. A trivial case of this is the fact that $\ind(I_2,G) = \binom{n}{2} - \ind(e,G)$, where $e$ is the graph on two vertices with an edge between them. This says nothing other than that the number of edges and non-edges in an $n$-vertex graph sum to $\binom{n}{2}$. With a little more work, we can express $\ind(I_3, G)$ in terms of induced counts of connected graphs as follows. There are four graphs on three vertices, namely $I_3$, the triangle denoted $T$, the path on three vertices denoted $P_3$ and the disjoint union of an edge and a vertex denoted $e + I_1$. By enumerating all induced subgraphs of $G$ on three vertices, we have
\[
\ind(I_3, G) = \binom{n}{3} - \ind(T,G) - \ind(P_2, G) - \ind(e + I_1, G). 
\]
The only disconnected graph on the right hand side is $e + I_1$, and by simple counting, it is not too hard to show that
\[
\ind(e+I_1, G) = (n-2)\ind(e,G) - 2 \ind(P_3, G) - 3 \ind(T,G).
\]
Substituting the second formula into the first gives an expression for $\ind(I_3, G)$ in terms of induced counts of connected graphs.

These calculations suggest that it is possible to express $\ind(I_k,G)$ in terms of induced counts $\ind(H,G)$ for connected graphs $H$, but that the calculations and formulae will get cumbersome. 
A new idea is needed to approach the problem in a systematic and manageable way. The next observation is the key insight to overcoming this hurdle and is at the heart of the proof of Theorem~\ref{thm:fastind}. It was proved by Csikv{\'a}ri and Frenkel~\cite{CsikFren}; the proof is short and can also be found in \cite{PatReg17}.

\begin{obs}
\label{obs:2}
Suppose $\tau(G)$ is an additive graph property, meaning that it satisfies the following two properties.
\begin{itemize}
    \item[(i)] $\tau(G)$ can be written as sum of products of induced graph counts, i.e.\ for all $G$ 
    \[
    \tau(G) = \sum_{i=1}^r \mu_i \prod_{H \in \mathcal{H}_i}\ind(H,G),
    \]
    where $\mathcal{H}_i$ is a (finite) set of graphs and $\mu_i \in \mathbb{C}$ is a constant for each $i = 1, \ldots, r$, and
    \item[(ii)] $\tau(G_1 \cup G_2) = \tau(G_1) + \tau(G_2)$ for all graphs $G_1$ and $G_2$. 
\end{itemize}
Then $\tau$ is in fact of a simpler form, namely, for all graphs $G$, we have
\[
\tau(G) = \lambda_1 \ind(H_1, G) + \cdots + \lambda_s \ind(H_s, G),
\]
where $H_1, \ldots, H_s$ are {\it connected} graphs and $\lambda_1, \ldots, \lambda_s \in \mathbb{C}$.
\end{obs}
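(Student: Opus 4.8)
The plan is to show that every $\tau$ satisfying (i) is, after expanding the products, an \emph{unconstrained} finite linear combination of induced-subgraph-count functions, and that additivity (ii) then forces the coefficient of every \emph{disconnected} graph to vanish. The first step rests on a ``product-collapse'' identity: for any graphs $H_1,H_2$,
\[
\ind(H_1,G)\,\ind(H_2,G)=\sum_{F} N_F\,\ind(F,G),
\]
where the finite sum is over (isomorphism classes of) graphs $F$ with $|F|\le |H_1|+|H_2|$, and $N_F\ge 0$ counts ordered pairs $(A,B)$ of subsets of $V(F)$ with $A\cup B=V(F)$, $F[A]\cong H_1$, $F[B]\cong H_2$. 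One proves this by grouping the pairs $(S_1,S_2)$ counted by the left-hand side according to the isomorphism type $F$ of $G[S_1\cup S_2]$. Iterating, every product $\prod_{H\in\mathcal{H}_i}\ind(H,G)$ is a finite nonnegative-integer combination of induced counts, so (i) rewrites as $\tau(G)=\sum_{F}\lambda_F\ind(F,G)$ with only finitely many $\lambda_F\in\C$ nonzero (the sum ranging over all graphs $F$, including the empty graph $I_0$, for which $\ind(I_0,G)=1$).

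Next I would record that the family $\{\ind(F,\cdot)\}_F$ is linearly independent, so that the representation $\tau=\sum_F\lambda_F\ind(F,\cdot)$ is \emph{unique}. Ordering graphs by number of vertices, this is immediate from $\ind(F,F)=1$ together with $\ind(F,G)=0$ whenever $|F|>|G|$, or $|F|=|G|$ and $F\not\cong G$: the corresponding ``evaluation matrix'' is block-triangular with ones on the diagonal. The same argument shows that the bivariate family $(G_1,G_2)\mapsto \ind(A,G_1)\,\ind(B,G_2)$ is linearly independent.

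Now I would bring in (ii). For a disjoint union, a subset splits as $S=S_1\sqcup S_2$ with $S_i\subseteq V(G_i)$ and $G[S]=G_1[S_1]\cup G_2[S_2]$, so grouping by isomorphism types gives
\[
\ind(F,G_1\cup G_2)=\sum_{A\sqcup B\cong F}\ind(A,G_1)\,\ind(B,G_2),
\]
the sum over ordered pairs of graphs whose disjoint union is isomorphic to $F$ (the empty graph allowed). Substituting this into $\tau(G_1\cup G_2)=\sum_F\lambda_F\ind(F,G_1\cup G_2)$, and writing $\tau(G_1)+\tau(G_2)=\sum_F\lambda_F\bigl[\ind(F,G_1)\ind(I_0,G_2)+\ind(I_0,G_1)\ind(F,G_2)\bigr]$, I compare coefficients of $\ind(A,G_1)\ind(B,G_2)$ using the bivariate linear independence: for $A$ and $B$ both nonempty the left-hand side contributes $\lambda_{A\sqcup B}$ while the right-hand side contributes $0$, so $\lambda_{A\sqcup B}=0$. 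Since every disconnected graph is the disjoint union of two nonempty graphs, $\lambda_F=0$ for all disconnected $F$; and $\lambda_{I_0}=\tau(\emptyset)=0$ because $\tau(\emptyset)=\tau(\emptyset\cup\emptyset)=2\tau(\emptyset)$. Hence $\tau(G)=\sum_{F\text{ connected}}\lambda_F\ind(F,G)$, as claimed.

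I expect the main obstacle to be the linear-independence/uniqueness step: it is what licenses ``comparing coefficients'' at all, and it is easy to overlook, since without it the reduction in the first paragraph would not pin down a canonical set of coefficients $\lambda_F$ to argue about. The remaining ingredients — the product-collapse identity and the disjoint-union identity for $\ind(F,\cdot)$ — are routine bookkeeping, the only mild subtlety being the consistent treatment of the empty graph as the multiplicative unit.
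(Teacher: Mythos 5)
Your proof is correct and follows the standard route (also the one in the cited references \cite{CsikFren,PatReg17}): collapse products of induced counts into a single linear combination $\sum_F\lambda_F\ind(F,\cdot)$, observe that the family $\{\ind(F,\cdot)\}$ is linearly independent so the coefficients are well-defined, and then feed the disjoint-union expansion $\ind(F,G_1\cup G_2)=\sum_{A\sqcup B\cong F}\ind(A,G_1)\ind(B,G_2)$ into additivity to force $\lambda_F=0$ for every disconnected $F$ and for $F=I_0$. You correctly identify linear independence as the step that licenses comparing coefficients, and your treatment of the empty graph is sound.
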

The observation above says that every additive graph parameter is a linear combination of $\ind(H_i, G)$ for {\it connected} $H_i$, and so by Observation~\ref{obs:1}, such additive graph parameters can be computed efficiently.\footnote{Actually efficient computation is not immediate because it depends on the number and size of the $H_i$; we address this later.}
Our task now is reduced to the task of expressing $\alpha_k(G) = \ind(I_k, G)$ in terms of additive graph parameters. In order to do this, we now switch from the combinatorial to the polynomial perspective of $\alpha_k(G)$. 

Recall that the $\alpha_k(G)$ are the coefficients of the independence polynomial $Z_G$, i.e.\ $Z_G(\lambda) = \alpha_0 + \alpha_1 \lambda + \cdots \alpha_d \lambda^d$. Suppose that $\eta_1, \ldots, \eta_d$ are the roots of $Z_G$. Noting that the constant term $\alpha_0$ is one, we can write $Z_G(\lambda) = (1 - \eta_1^{-1}\lambda) \cdots (1 - \eta_d^{-1}\lambda)$. While we cannot compute the $\eta_i$ directly, we can relate them to the coefficients $\alpha_k$ by expanding the product above. We see that the $\alpha_k$ are the elementary symmetric polynomials in $\eta_i^{-1}$, namely 
\[
\alpha_0 = 1, 
\hspace{1 cm}  \alpha_1 = -\sum_{1 \leq i \leq d} \eta_i^{-1},
\hspace{1 cm} \alpha_2 = \sum_{1 \leq i < j \leq d} \eta_i^{-1}\eta_j^{-1}
\hspace{1 cm} \text{etc}.
\]
Another important class of symmetric polynomials are the power sums. Let us define the $i$th power sum $p_i$ to be 
\[
p_i = \eta_1^{-i} + \cdots + \eta_d^{-i}.
\]
It is well known that the power sums can be related to the elementary symmetric polynomials using the Newton identities. There are several short derivations of these identities. 
In the context of our problem, the Newton identities give the following expressions relating the $\alpha_i$ and the $p_i$.
\begin{align*}
-\alpha_1 &= \alpha_0 p_1 \\
-2\alpha_2 &= \alpha_0 p_2 + \alpha_1 p_1 \\
-3\alpha_3 &= \alpha_0 p_3 + \alpha_1 p_2 + \alpha_2 p_1 \\
& \vdots \\
-t\alpha_t &= \alpha_0 p_t + \alpha_1 p_{t-1} + \cdots + \alpha_{t-1} p_1.
\end{align*}
From this it is easy to see that if we know the values of the $p_i$ then we can inductively compute the $\alpha_i$. Indeed, if we know the values of $p_1, \ldots, p_t$, and we also know (by induction) the values of $\alpha_1 \ldots, \alpha_{t-1}$ then using the $t^{th}$ identity, we can compute $\alpha_t$. Thus the problem of efficiently computing the $\alpha_i$ is reduced to that of efficiently computing the $p_i$. It is possible to efficiently compute the power sums because, as the reader may have guessed, the power sums are additive graph parameters.

\begin{obs}
\label{obs:3}
The power sums $p_i = p_i(G)$ as defined above have the property of being  additive graph parameters.
\end{obs}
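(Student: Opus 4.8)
The plan is to check the two defining conditions of Observation~\ref{obs:2} directly for $\tau = p_i$, each of which follows quickly from something already recorded in the excerpt. So the whole task is to verify (i) that $p_i$ is a sum of products of induced subgraph counts and (ii) that $p_i$ is additive over disjoint unions.

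I would dispatch condition (ii) first, since it is immediate from the multiplicativity of the independence polynomial, $Z_{G_1 \cup G_2} = Z_{G_1} Z_{G_2}$, noted in Section~\ref{se:prelim}. Because $Z_G(0) = 1$, zero is never a root, so the reciprocal roots $\eta_j^{-1}$ are well defined; and because the roots of a product of polynomials form, with multiplicity, the union of the roots of the factors, the multiset of reciprocal roots of $Z_{G_1 \cup G_2}$ is the disjoint union of those of $Z_{G_1}$ and those of $Z_{G_2}$. Summing $i$th powers over a disjoint union of multisets splits as the sum of the two partial sums, which is exactly $p_i(G_1 \cup G_2) = p_i(G_1) + p_i(G_2)$.

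For condition (i), I would run the Newton identities in the form already displayed, $-t\alpha_t = \alpha_0 p_t + \alpha_1 p_{t-1} + \cdots + \alpha_{t-1} p_1$ with $\alpha_0 = 1$, and solve for the power sum: $p_t = -t\alpha_t - \sum_{j=1}^{t-1} \alpha_j\, p_{t-j}$. A straightforward induction on $t$ then shows that $p_t$ is a polynomial in $\alpha_1, \dots, \alpha_t$ with integer coefficients. Finally, each coefficient $\alpha_k$ is itself an induced subgraph count, $\alpha_k = \ind(I_k, G)$, so expanding this polynomial writes $p_t$ as a finite sum of scalar multiples of products of the quantities $\ind(I_k, G)$, which is precisely the shape demanded by condition (i).

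I do not expect a genuine obstacle here: the statement is essentially bookkeeping resting on the multiplicativity of $Z_G$ and on Newton's identities. The one place that warrants a word of care is reconciling the assertion that $p_t$ is a polynomial in the $\alpha_j$ with the exact wording of Observation~\ref{obs:2}(i): a monomial such as $\alpha_1^2 = \ind(I_1,G)^2$ forces the index sets $\mathcal{H}_i$ to be read as multisets (equivalently, one allows a graph to be repeated in the product), which is harmless. Once conditions (i) and (ii) are in place, Observation~\ref{obs:2} upgrades the conclusion to $p_i(G) = \sum_j \lambda_j \ind(H_j, G)$ with the $H_j$ connected, and this is exactly what makes the $p_i$ — and then, feeding the Newton identities the other way, the $\alpha_i$ — efficiently computable via Observation~\ref{obs:1}.
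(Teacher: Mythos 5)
Your argument matches the paper's proof: condition~(ii) via multiplicativity of $Z_G$ and the resulting splitting of the root multiset, and condition~(i) via induction through the rearranged Newton identity $p_t = -t\alpha_t - \sum_{j=1}^{t-1}\alpha_j p_{t-j}$ together with $\alpha_k = \ind(I_k,G)$. The side remark about allowing repeated factors (multisets $\mathcal{H}_i$) is a fair reading and does not change anything.
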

It is easy to verify that $p_i$ satisfies the second property of an additive graph parameter, namely that $p_i(G_1 \cup G_2) = p_i(G_1) + p_i(G_2)$ for any graphs $G_1$ and $G_2$. Indeed, since $Z_{G_1 \cup G_2} = Z_{G_1}Z_{G_2}$ (see Section~\ref{se:prelim}), if $\eta_1, \ldots, \eta_d$ are the roots of of $Z_{G_1}$ and $\nu_1, \ldots, \nu_{d'}$ are the roots of $Z_{G_2}$ then $\eta_1, \ldots, \eta_d, \nu_1, \ldots, \nu_{d'}$ are the roots of $Z_{G_1 \cup G_2}$ so that
\[
p_i(G_1 \cup G_2) = \eta_1^{-i} + \cdots + \eta_d^{-i} + \nu_1^{-i} + \cdots + \nu_{d'}^{-i} = p_i(G_1) + p_i(G_2).
\]
For the first property, we use the Newton identities. Note that, since $\alpha_0=1$, we can rearrange the $t^{\rm th}$ identity and express $p_t$ as a sum of products of $p_1, \ldots, p_{t-1}$ and $\alpha_1, \ldots, \alpha_t$. We know that the $\alpha_i$ are induced graph counts, and if we assume by induction that $p_1, \ldots, p_{t-1}$ are also sums of products of induced graph counts, then we see that $p_t$ is also a sum of products of induced graph counts and so satisfies property (i) of an additive graph parameter.

We now have all the ingredients to explain how to compute the $\alpha_k$ efficiently. We can compute the power sums $p_i$ efficiently. This is because the power sums are additive graph parameters (Observation~\ref{obs:3}) and they are therefore linear combinations of induced counts of {\it connected} graphs (Observation~\ref{obs:2}). Each induced graph count $\ind(H,G)$ in this linear combination can be computed efficiently when $G$ is of bounded degree since $H$ is connected (Observation~\ref{obs:1}) thus allowing us to compute the power sums efficiently. Once we have computed the power sums $p_1, p_2, \ldots$, we can inductively compute the $\alpha_i$ using the Newton identities. 

This gives the main ideas of the argument although there are a few subtleties that we have glossed over. The main one is that it is not quite obvious that we can compute the power sums $p_i(G)$ efficiently, i.e.\ in time ${\rm poly(|G|) } \Delta^{O(i)}$. While the $p_i(G)$ can be expressed as a linear combination of induced counts of connected graphs 
\[
p_i(G) = \lambda_1 \ind(H_1, G) + \cdots +\lambda_s \ind(H_s,G),
\]
we have not said how to find $H_1, \ldots, H_s$ and $\lambda_1, \ldots, \lambda_s$. 
Conceivably, $s$ could be superexponential in $i$ or the $H_i$ could have size superlinear in $i$; in either case we would not automatically get the desired running time. However, by using the Newton identities more carefully, and using the fact that $G$ has bounded degree it is not too difficult to overcome these technical obstacles. All the details can be found in \cite{PatReg17}.

\subsection{Computing the coefficients of other graph polynomials efficiently}

In Section~\ref{se:sketchproof}, we described the main idea of how we can efficiently compute the first $\ln |G|$ coefficients of the independence polynomial $Z_G$ for graphs $G$ of bounded degree. The ideas can be generalised to work for many other graph polynomials of interest.

What are the crucial properties of the independence polynomial $Z_G$ that we use in the sketch proof of Theorem~\ref{thm:fastind}? The whole proof is based around manipulating induced graph counts, so we certainly need the coefficients of $Z_G$ to be (functions of) induced graph counts. We also crucially need that $Z_G$ is multiplicative, which allows us to conclude that the power sums are additive, therefore allowing us to compute them efficiently.  

In \cite{PatReg17}, we show that if a graph polynomial $P = P_G$ satisfies certain  properties given below, then its coefficients can be computed efficiently for bounded degree graphs i.e.\ the $i$th coefficient of $P_G$ can be computed in time ${\rm poly}(n)\Delta^{O(i)}$ where $G$ is an $n$-vertex graph of maximum degree at most $\Delta$. As with the independence polynomial, this is enough to use the Taylor polynomial interpolation method from Section~\ref{se:Barv} to give an approximation algorithm for computing $P_G(z)$ (provided $z$ is in a suitable zero-free disk) with the required run time of an FPTAS. 

Suppose $P = P_G$ is a graph polynomial given by $P_G(z) = a_0 + a_1z + \cdots + a_dz^d$. Suppose that $P$ satisfies the following properties for some fixed constant $\alpha > 0$:
\begin{itemize}
    \item[(i)] for each $\ell$, the $\ell$th coefficient of $P$ can be expressed as a ``$\alpha$-bounded'' linear combination of induced graph counts, that is, for all $G \in \mathcal{G}_{\Delta}$
    \[
    a_{\ell}(G) = \sum_{H} \zeta_{H, \ell} \ind(H,G),
    \]
    where the sum is over graphs $H$ with at most $\alpha \ell$ vertices and
     $\zeta_{H, \ell} \in \mathbb{C}$ are constants (independent of $G$);
    \item[(ii)] in property (i), for each $H$ we can compute $\zeta_{H, \ell}$ in time $\exp(O(|H|)$; and
    \item[(iii)] $P_G$ is multiplicative, i.e.\ $P_{G_1 \cup G_2} = P_{G_1}P_{G_2}$.
\end{itemize}
Then we can compute $a_i(G)$ in time ${\rm poly}(|G|) \Delta^{O(i)}$. Again, using the Taylor polynomial interpolation method, this leads to an FPTAS for approximating $P_G(z)$ for $G \in \mathcal{G}_{\Delta}$, again provided we establish a suitable zero-free disk containing $z$. 

Note that in the case of the independence polynomial, properties (i) and (ii) are trivial and we saw it is easy to verify property (iii). These properties also hold for various other graph polynomials including the matching polynomial, the chromatic polynomial, and the Tutte polynomial.\footnote{The Tutte polynomial is a polynomial in two variables, but the properties above hold if one of the variables is fixed} We will not check these here, but refer the interested reader to \cite{PatReg17}. It is also worth noting that the technique described in this section can be adapted and applied to polynomials beyond those satisfying properties (i)-(iii) above; see \cite{LSSleeyang,BarReg,mann2019approximation}.

In Section~\ref{se:Barv} we explained how one can design algorithms for approximating graph polynomials using Taylor's theorem. In this section, we showed how to make these algorithms efficient (having the running time of an FPTAS) for many graph polynomials provided we restrict attention to bounded degree graphs. We have seen in Section~\ref{se:Barv} that essential to all of these algorithms is to establish a suitable zero-free disk or zero-free region in the complex plane for the graph polynomial in question. Our discussion of algorithms ends at this point and in the next section, we turn our attention entirely to the independent problem of establishing these zero-free regions.

\section{Techniques for proving absence of zeros}
\label{se:zerofree}
In the previous sections, we have sketched how the problem of approximately evaluating graph polynomials (particularly the independence polynomial) in a region of the complex plane is reduced to the problem of establishing that the polynomial has no zeros in that region.
There is a long history of proving such results about the locations of zeros of graph polynomials and partition functions. The techniques used often have their origin in statistical physics but have now been picked up and extended by the theoretical computer science community. In this section we will discuss three different techniques.

\subsection{Recursion and ratios}
Many graph polynomials satisfy recursions in which the polynomial for a given graph can be expressed in terms of the polynomial for smaller graphs. Such recursions allow us  to prove properties about the graph polynomial, such as absence of zeros, by induction. However, rather than working with the polynomials directly, it is often more productive to work instead with related quantities.
We illustrate this approach through our running example of the independence polynomial and at the end of the section we direct the reader to further work in which this technique is employed.

Our aim is to sketch a proof of the following result due to Shearer~\cite{Shearer}, Dobrushin~\cite{Dobrushin96} and Scott and Sokal~\cite{ScottSokal}:
\begin{theorem}\label{thm:shearer}
Let $G=(V,E)$ be a graph with maximum degree $\Delta \geq 2$ and  let $\lambda\in \mathbb{C}$ satisfy $|\lambda|\leq \lambda^*(\Delta):=\frac{(\Delta-1)^{\Delta-1}}{\Delta^{\Delta}}$. Then $Z_G(\lambda)\neq 0$.
\end{theorem}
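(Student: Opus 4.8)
The plan is to induct on the number of vertices of $G$, working not with $Z_G$ itself but with an auxiliary \emph{ratio}. For a vertex $v$ with closed neighbourhood $N[v]$ set
\[
R_{G,v}(\lambda) := \frac{\lambda\, Z_{G-N[v]}(\lambda)}{Z_{G-v}(\lambda)}
\]
whenever $Z_{G-v}(\lambda)\neq 0$. Splitting the independent sets of $G$ according to whether they contain $v$ gives the deletion recursion $Z_G = Z_{G-v} + \lambda Z_{G-N[v]}$, i.e.\ $Z_G = Z_{G-v}\,(1 + R_{G,v})$; so once $Z_{G-v}\neq 0$ is known, the nonvanishing of $Z_G$ becomes the single inequality $R_{G,v}\neq -1$.

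The engine of the argument is a recursion for $R_{G,v}$ in terms of ratios of strictly smaller graphs. List the neighbours of $v$ as $u_1,\dots,u_d$ (so $d=\deg_G(v)\le\Delta$) and put $H_j := G - v - \{u_1,\dots,u_{j-1}\}$, so that $H_1 = G-v$ and $H_d - u_d = G-N[v]$. Telescoping and then applying the deletion recursion to each factor (with $u_j$ in the role of $v$, inside $H_j$) gives
\[
R_{G,v} = \lambda \cdot \frac{Z_{G-N[v]}}{Z_{G-v}} = \lambda\prod_{j=1}^{d}\frac{Z_{H_j - u_j}}{Z_{H_j}} = \lambda\prod_{j=1}^{d}\frac{1}{1 + R_{H_j,u_j}} ,
\]
where all denominators are nonzero by the induction hypothesis applied to the smaller graphs $H_j$. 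The crucial structural point is that each $u_j$ has degree at most $\Delta-1$ in $H_j$, because its neighbour $v$ has been deleted.

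Accordingly the statement I would carry through the induction is: for every graph $G$ of maximum degree at most $\Delta$, (a) $Z_G(\lambda)\neq 0$, and (b) $R_{G,v}(\lambda)$ lies in the closed disk $D := \{z\in\C : |z|\le 1/\Delta\}$ for every vertex $v$ with $\deg_G(v)\le\Delta-1$. For (b): each $R_{H_j,u_j}$ lies in $D$ by induction, so $|1+R_{H_j,u_j}|\ge 1-1/\Delta$, hence $|R_{G,v}|\le |\lambda|\,(1-1/\Delta)^{-(\Delta-1)}$; and $|\lambda|\le\lambda^*(\Delta)$ is \emph{exactly} the hypothesis that forces this to be at most $1/\Delta$, since $r\mapsto r(1-r)^{\Delta-1}$ is maximised at $r=1/\Delta$ with maximum $\lambda^*(\Delta)$. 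For (a): pick any vertex $v$ (degree $d\le\Delta$); then $Z_{G-v}\neq 0$ by induction, each $R_{H_j,u_j}\in D$ by induction, so $|R_{G,v}|\le|\lambda|\,(1-1/\Delta)^{-\Delta}\le\lambda^*(\Delta)\cdot\Delta^\Delta/(\Delta-1)^\Delta = 1/(\Delta-1)\le 1$, whence $R_{G,v}\neq -1$ and $Z_G\neq 0$. The base case is trivial ($Z\equiv 1$ on the empty graph; $1+\lambda\neq 0$ on one vertex since $\lambda^*(\Delta)<1$).

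I expect two things to be the main obstacles. The conceptual one is finding the right object to induct on — replacing ``$Z_G\neq 0$'' by the far more stable assertion that $R_{G,v}$ stays in a fixed disk, and recognising that optimising $r(1-r)^{\Delta-1}$ is precisely what reproduces the extremal constant $\lambda^*(\Delta)$ (so the radius in the theorem is sharp). The technical loose end is the boundary case $|\lambda|=\lambda^*(\Delta)$ with $\Delta=2$, where the estimate in (a) only gives $|R_{G,v}|\le 1$ and one must still exclude $R_{G,v}=-1$; this is done by hand, since graphs of maximum degree $2$ are disjoint unions of paths and cycles, whose independence polynomials are explicit and have only real roots of modulus strictly exceeding $1/4=\lambda^*(2)$. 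For $\Delta\ge 3$ the bound $1/(\Delta-1)<1$ already gives a strict inequality, so no special case arises there.
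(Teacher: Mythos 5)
Your proof is correct and follows essentially the same route as the paper: the same ratio $R_{G,v}$, the same telescoping recursion $R_{G,v}=\lambda/\prod_j(1+R_{H_j,u_j})$, and an induction that traps the ratio in a disk of radius $1/\Delta$, using the key structural fact that each $u_j$ loses a neighbour (namely $v$) when passing to $H_j$ and so has degree at most $\Delta-1$ there. The one place the paper is slightly cleaner: it carries a \emph{strict} inequality $|R|<1/\Delta$ through the induction — the base case is an empty product giving $|R|=|\lambda|\le \lambda^*(\Delta)<1/\Delta$ outright — so the final estimate reads $|R_{G,v_0}|<1/(\Delta-1)\le 1$ and the boundary case $\Delta=2$, $|\lambda|=1/4$ that you handle separately via explicit roots of paths and cycles never arises; your patch is correct but would be unnecessary had you tracked strict inequalities from the start.
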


Let us briefly discuss this result before delving into the proof.
First, recall that by the Taylor polynomial interpolation method (particularly Theorem~\ref{thm:Barvinok} and Theorem~\ref{thm:fastind}), this result immediately implies an FPTAS for computing $Z_G(\lambda)$ for $G\in \mathcal{G}_\Delta$ inside the zero-free disk given by $|\lambda|< \lambda^*(\Delta)$. Second, note that if we are only interested in zero-free \emph{disks}, then one cannot improve Theorem~\ref{thm:shearer} in the sense that we cannot increase the constant $\lambda^*(\Delta)$.
Indeed, one can show that there is a sequence of graphs $G_n$ (in fact trees) of maximum degree $\Delta$ and negative numbers $\lambda_n$ such that $Z_{G_n}(\lambda_n)=0$ and $\lambda_n \to -\lambda^*(\Delta)$~\cite{ScottSokal}. However there has been a lot of interest recently in establishing zero-freeness for non-disk regions. Most notably, it was shown recently~\cite{PRSokal} that $Z_G(\lambda) \not= 0$ whenever $G \in \mathcal{G}_{\Delta}$ and $\lambda \in R \subseteq \mathbb{C}$ where $R$ is an open set containing the interval $[0, \lambda_c(\Delta))$ and $\lambda_c(\Delta) := (\Delta -1)^{\Delta-1 }/(\Delta- 2)^{\Delta}$. One significance of $\lambda_c(\Delta)$  is that it is an algorithmic threshold for real parameters $\lambda$: using the interpolation method, the result in \cite{PRSokal} implies  that there is an FPTAS\footnote{In fact, an FPTAS was established earlier in \cite{Weitz} using the correlation decay method.} to compute $Z_{G}(\lambda)$ whenever $G \in \mathcal{G}_{\Delta}$ and $\lambda \in [0, \lambda_c(\Delta))$, while for $\lambda > \lambda_c(\Delta)$, it is known that there is no such FPTAS unless $P = NP$ \cite{BeyondlambdacSlyandSun,BeyondlambdacGalanisetal}.

We now discuss the proof of Theorem~\ref{thm:shearer}.
Let $G=(V,E)$ be a graph and fix a vertex $v\in V$.
We can write down a recursion for $Z_G(\lambda) = \sum_{S \subseteq V \text{ independent }} \lambda^{|S|}$ by splitting the sum over those independent sets that do not contain $v$ and those that do to obtain
\begin{equation}\label{eq:fundamental id}
Z_G(\lambda)=Z_{G-v}(\lambda)+\lambda Z_{G\setminus [N[v]}(\lambda),
\end{equation}
where $G-v$ (resp. $G\setminus N[v]$) denote the graphs obtained from $G$ by removing $v$ (resp. $v$ and its neighbours in $G$). 
As mentioned earlier, rather than working directly with a recursion for $Z_G$, it turns out to be more useful to work with a recursion of a related quantity.
Define the \emph{ratio}, $R_{G,v}$, by
\begin{equation}\label{eq:ratio}
R_{G,v}(\lambda):=\frac{\lambda Z_{G\setminus N[v]}(  \lambda)}{Z_{G-v}(  \lambda)}.
\end{equation}
Observe that provided $Z_{G-v}(\lambda)\neq 0$, we have $Z_G(\lambda)=0$ if and only if $R_{G,v}(\lambda)=-1$ (using \eqref{eq:fundamental id}).
So to prove absence of zeros it suffices to inductively show that the ratios avoid $-1$.

Next we establish a recursion for these ratios.
Let $G$ be a graph with fixed vertex $u_0$ and let $\lambda\in \mathbb{C}$.
Let $u_1,\ldots,u_d$ be the neighbours of $u_0$ in $G$ (in any order). 
Set $G_0=G-u_0$ and define for $i=1,\ldots,d$, $G_i:=G_{i-1}-u_{i}$ (so $G_{d}=G\setminus N[u_0]$).
Suppose that $Z_{G_i}(\lambda)\neq 0$ for all $i=0,\ldots,d$.
Then we use `telescoping' to write 
\[
\frac{R_{G,u_0}(\lambda)}{\lambda}
= \frac{Z_{G_d}(\lambda)}{Z_{G_0}(\lambda)}
= \frac{Z_{G_1}(\lambda)}{Z_{G_0}(\lambda)}
\cdot \frac{Z_{G_2}(\lambda)}{Z_{G_1}(\lambda)}
\cdots \frac{Z_{G_{d}}(\lambda)}{Z_{G_{d-1}}(\lambda)}.
\]
Applying~\eqref{eq:fundamental id} to each of the denominators and after some rearranging we end up with the following identity:
\begin{equation}\label{eq:ratio recurse}
R_{G,u_0}(\lambda)=\frac{\lambda}{\prod_{i=1}^d (1+R_{G_{i-1},u_i}(\lambda))}.
\end{equation}

The identity above captures all the relevant combinatorics of independent sets that we need and the rest of the proof essentially boils down to proving a property about the above recursion. 


\begin{proof}[Proof of Theorem~\ref{thm:shearer}]
We may assume that $G$ is connected (if $G$ has connected components $H_1, \ldots, H_k$ then $Z_G(\lambda) = Z_{H_1}(\lambda) \cdots Z_{H_k}(\lambda)$ and so it is sufficient to prove the theorem for each $H_i$).

Fix $v_0\in V$.
We will show by induction that the following holds for all $U\subseteq V\setminus \{v_0\}$:
\begin{itemize}
\item[(i)] $Z_{G[U]}(\lambda)\neq 0$,
\item[(ii)]\text{if $u_0\in U$ has a neighbour in $V\setminus U$, then $|R_{G[U],u_0}(\lambda)|<1/\Delta$.}
\end{itemize}
Indeed if $|U|=0$ then this is trivially true, so
suppose that $|U|>0$.
Then since $G$ is connected, there is $u_0\in U$ that has a neighbour $v\in V\setminus U$.
Let us write $H=G[U]$ and let $u_1,\ldots,u_d$ be the neighbours of $u_0$ in $H$. Let $H_0=H-u_0$ and $H_i=H_{i-1}-u_i$ for $i>0$.
Then, by induction $Z_{H_i}(\lambda)\neq 0$ and $|R_{H_i,u_{i+1}}(\lambda)|<1/\Delta$ (since $u_{i+1}$ has a neighbour in $U \setminus V(H_i)$, namely $u_0$).
So we may use~\eqref{eq:ratio recurse} to conclude that
\begin{align}
|R_{H,u_0}(\lambda)|=\frac{|\lambda|}{\prod_{i=1}^d |1+R_{H_{i-1},u_i}(\lambda)|}&<|\lambda| (1-1/\Delta)^{-d}\nonumber
\\
&\leq |\lambda|\left( \frac{\Delta-1}{\Delta} \right)^{-(\Delta-1)} = 1/\Delta,\label{eq:derive}
\end{align}
where we used that $d \leq \Delta - 1$ (since $u_0$ has a neighbour in $V \setminus U$) and that $|\lambda| \leq \lambda_{\Delta}$. This shows (ii).
Then, we also see that $R_{H,u_0}(\lambda)\neq -1$ and so $Z_H(\lambda)\neq 0$, showing (i). This completes the induction.

To conclude the proof of the theorem we apply the same trick once more to $R_{G,v_0}$. From \eqref{eq:derive} we then obtain the bound $|R_{G,v_0}|<1/(\Delta-1)$ since $v_0$ may have $d = \Delta$ neighbours rather than $d \leq \Delta -1$.
Again we have $R_{G,v_0}(\lambda)\neq 1$ and so $Z_G(\lambda)\neq 0$, as desired.
\end{proof}

The proof essentially consists of two steps. First express a suitably chosen ratio in terms of ratios of smaller graphs. 
Secondly, use this expression to inductively show that these ratios are `trapped' in some suitable region of the complex plane (the open disk of radius $1/\Delta$ in the proof above).
Of course the real ingenuity comes in finding the right `trapping region'.

This approach can be traced back to work of Dobrushin~\cite{Dobrushin96} and possibly even earlier.
Recent years have seen many variations and refinements of this approach resulting in significant extensions of Theorem~\ref{thm:shearer}~\cite{PRSokal,bencs2018note,bencs2022complex} and zero-free regions for permanents~\cite{Barper,Barperhaf,Bardiagdom}, for the graph homomorphism partition functions~\cite{BarSobhom,BarSobhomwith}, for the partition function of the Ising and Potts models~\cite{LSSfisher,LSS2Delta,PRising,BarBarising,BDPR21,CDKPR}, for Holant problems~\cite{Regzero} and for various other graph polynomials~\cite{Barcliques,Barcube,BarReg,BarPella,Barmixed,li2021complex}.

\subsection{Stability of multivariate polynomials}

In this subsection we briefly mention the technique of polynomial stability without going into too much detail. The basic idea here is that there are certain operations on polynomials that preserve certain useful properties. If one can use these operations to construct some desired graph polynomial or partition function from ``elementary'' polynomials, we can establish useful properties of the graph polynomial / partition function.
The method is often most effective for multivariate polynomials, and indeed many graph polynomials have multivariate counterparts.

For our running example, the independence polynomial, the multivariate counterpart is defined as follows. Let $G=(V,E)$ be a graph and associate to each vertex $v$ a variable $x_v$.
The \emph{multivariate independence polynomial} is then defined as
\[
Z_G((x_v))=\sum_{\substack{S\subseteq V\\\text{independent}}} x^S,
\]
where we use the shorthand notation $x^S:=\prod_{v\in S} x_v$. Note that if we set all the variables equal to $\lambda$ then we recover the original (univariate) independence polynomial. The multivariate independence polynomial is a multi-affine polynomial meaning that it is affine in each variable (i.e.\ if we fix all but one variable $x_v$ it becomes a polynomial of degree $1$ in $x_v$). It is easy to see that any multi-affine polynomial $f$ (in the same variables $(x_v)_{v \in V}$) can be written as $f = \sum_{S\subseteq V} a_s x^S$ for some constants $a_S$.

For two multi-affine polynomials $P=\sum_{S\subseteq V} p_S x^S$ and $Q=\sum_{S\subseteq V}q_S x^S$, their \emph{Schur product}, $P*Q$ is defined as the multi-affine polynomial in which the coefficient of $x^S$ is  $p_S\cdot q_S$ i.e.\ $P*Q = \sum_{S\subseteq V} p_Sq_S x^S$.
We can build up the polynomial $Z_G$ using Schur product of simpler polynomials as follows. Suppose $H_1$ and $H_2$ are graphs on the same vertex set $V$ and $G$ is the union\footnote{This is very different from the disjoint union of graphs that we made heavy use of in Section~\ref{se:polytime}.} of $H_1$ and $H_2$ (i.e.\ the edges of $G$ are precisely the edges of $H_1$ together with the edges of $H_2$). Then 
\[
Z_G = Z_{H_1} * Z_{H_2}.
\]
This is easy to see since we know $S$ is an independent set of $G$ if and only if $S$ is an independent set of both $H_1$ and $H_2$ and the Schur product has the corresponding property that the coefficient of $x^S$ is $1$ in $Z_{H_1} * Z_{H_2}$ if and only if it is $1$ in both $Z_{H_1}$ and in $Z_{H_2}$.
For example the $4$-cycle $C_4$ with vertex set $\{1,2,3,4\}$ and edges $\{1,2\}$, $\{2,3\}$, $\{3,4\}$ and $\{4,1\}$ is the union of two matchings $M_1$ with edges $\{1,2\}, \{3,4\}$ and $M_2$ with edges $\{1,3\}, \{2,4\}$. Using the multiplicative property\footnote{We showed this property for the univariate independence polynomial and it follows in the same way for the multivariate version} of the independence polynomial, we know
\[
Z_{M_1}=(1+x_1+x_{2})(1+x_{3}+x_4) \quad \text{ and } \quad Z_{M_2}=(1+x_2+x_{3})(1+x_{1}+x_4)
\]
and using the Schur product property, one can check
\[
Z_{C_4} = Z_{M_1}*Z_{M_2} = (1 + x_1 + x_2 + x_3 + x_4 + x_1x_3 + x_2x_4). 
\]

The Schur product corresponds beautifully well to taking unions of graphs for the independence polynomial, but does it preserve any useful properties? Writing $\mathbb{D}$ for the open unit disk in $\mathbb{C}$, we say a multi-affine polynomial $P=\sum_{S\subseteq V} p_S x^S$ is $\mathbb{D}$-stable if $P((x_v)_{v\in V}) \not= 0$ whenever $x_v \in \mathbb{D}$ for all $v \in V$. It is well known (see~\cite{Barbook}) that if $P$ and $Q$ are $\mathbb{D}$-stable then so is $P*Q$. This seems promising for us, but unfortunately, the independence polynomial of a matching or indeed a single edge (out of which we build all other independence polynomials) is not $\mathbb{D}$-stable, e.g.\ $Z_{M_1}(-\frac{1}{2}, -\frac{1}{2}, 0, 0) = 0$. The independence polynomial of a matching is however non-zero if all the arguments are in an open disk of radius $1/2$. Now, using the fact that every graph in $\mathcal{G}_{\Delta}$ is the union of at most $\Delta + 1$ matchings (Vizing's theorem) and applying a simple scaling argument, one can still make use of the $\mathbb{D}$-stability of Schur products to show that $Z_G$ is non-zero if all arguments are in a disk of radius smaller than $1/2^{\Delta + 1}$, where $G \in \mathcal{G}_{\Delta}$.

This is a much weaker bound than  Theorem~\ref{thm:shearer} from the 
the previous subsection, but is given simply to illustrate the idea of stability.
The idea of using multi-affine polynomials and operations preserving zero-freeness was pioneered by Asano~\cite{Asano} about fifty years ago to give a short and elegant proof of the famous Lee-Yang theorem (see also~\cite{Barbook} for a proof using Schur products.) The theorem states that the partition function of the Ising model (in terms of vertex activities), which essentially is the generating function of the edge cuts in the graph, has all its zeros on the unit circle under suitable conditions; we choose not to introduce the relevant background here. 
By now there are several variations of the technique, some of which use the Grace-Sz\"ego-Walsh theorem, and they have been applied to partition functions of several models and graph polynomials~\cite{Ruellegraph-counting,Ruellegraph-counting,Wagner,ZerosHolant,ZerosHolant,BcR21}.

\subsection{The polymer method}

We introduced the multivariate independence polynomial in the last subsection to illustrate the idea of polynomial stability. It turns out that many other graph polynomials and partition functions can be expressed as evaluations of multivariate independence polynomials of a particular type. For this reason, there has been a lot of interest in understanding and proving conditions that guarantee zero-freeness of such multivariate independence polynomials.
This idea of first rewriting a partition function/graph polynomial as an evaluation of a multivariate independence polynomial and then checking conditions from the literature known to guarantee that the latter evaluation is nonzero is a powerful technique originating in statistical physics. There, the multivariate independence polynomial is sometimes called the partition function of a polymer model, and the technique we describe is sometimes called the polymer method.

We will give an example of this idea applied to the chromatic polynomial, a graph polynomial used for counting proper colourings of a graph, which we will shortly introduce.
We sketch a proof of a result of F\'ernandez and Procacci~\cite{FerProc} and Jackson, Procacci and Sokal~\cite{JPS13} about zero-freeness of the chromatic polynomial. At the end of the subsection, we list some recent results based on this technique and indicate how a variation of this technique can in fact be used directly to design efficient algorithms to approximate graph polynomials, without having to use the interpolation method.

\subsubsection{The chromatic polynomial}


For a graph $G=(V,E)$ and integer $q$, a proper $q$ colouring of $G$ is an assignment of $q$ colours (usually labelled $1, \ldots, q$) to the vertices such that adjacent vertices receive different colours. This means in particular that all vertices assigned some fixed colour $i$ form an independent set. The function $\chi_G$ counts the number of proper $q$-colourings of $G$, that is, for each $q \in \mathbb{N}$, $\chi_G(q)$ is defined to be the number of proper $q$-colourings of $G$. 
For example the number of proper $q$-colourings of a triangle is $q(q-1)(q-2)$ since after ordering the vertices arbitrarily, the first vertex can receive any of the $q$ colours, the second vertex may receive any of the colours except the colour of the first vertex, and the third vertex may receive any colour except those of the first two vertices (which are different).\footnote{Note that the formula is correct even when $q<3$, i.e.\ when there are no proper $q$-colourings of the triangle.} More generally, the number of proper $q$ colourings of $K_r$, the complete graph on $r$ vertices is $q(q-1) \cdots (q-r+1)$, i.e.\ $\chi_{K_r}(q) = q(q-1) \cdots (q-r+1)$ for every $q \in \mathbb{N}$. For any tree $T$ on $r$ vertices, $\chi_T(q) = q(q-1)^{r-1}$ for all $q \in \mathbb{N}$ since if we colour the vertices in a breadth-first ordering, then the first vertex may receive any of the $q$ colours, while each subsequent vertex can receive any colour except that of its parent. Of course, it is not usually so easy to determine $\chi_G(q)$ because it is NP-complete to decide if there is even one proper $q$-colouring of $G$, i.e.\ whether $\chi_G(q)$ is positive or not. Nonetheless, as the examples above suggest, $\chi_G(q)$ is always a polynomial in $q$ as we shall see shortly, and $\chi_G$ is called the chromatic polynomial of $G$. 

The chromatic polynomial was introduced in 1912 by Birkhoff in an attempt to prove the four colour theorem. It has a long history and has been studied from many perspectives together with its far-reaching generalisation, the Tutte polynomial (see \cite{chromaticbook,ellis2022handbook} for a comprehensive account). 

We now establish a very useful formula for the chromatic polynomial called the random cluster model, due to Fortuin and Kasteleyn~(see \cite{FortuinKasteleyn}); it is sometimes used as the definition of the chromatic polynomial. Formally, a proper $q$-colouring of a graph $G=(V,E)$ is a function $f: V \rightarrow \{1, \ldots, q\} =: [q] $ such that $f(u) \not= f(v)$ whenever $\{u,v\} \in E$. Then we can write 
\[
\chi_G(q) = \sum_{f: V \rightarrow [q]} \prod_{\{u,v\} \in E} \mathbb{1}_{f(u) \not= f(v)},
\]
where $\mathbb{1}_{f(u) \not= f(v)}$ is the indicator function that $f(u) \not = f(v)$ (so that the product is $1$ if and only if all edges are properly coloured). Replacing $\mathbb{1}_{f(u) \not= f(v)}$ with $(1 - \mathbb{1}_{f(u) = f(v))})$ and expanding, we obtain
\begin{align*}
\chi_G(q) = \sum_{f: V \rightarrow [q]} \prod_{\{u,v\} \in E} (1 - \mathbb{1}_{f(u) = f(v))})
&= \sum_{f: V \rightarrow [q]} \sum_{F \subseteq E} (-1)^{|F|}
\prod_{\{u,v\} \in F}  \mathbb{1}_{f(u) = f(v)} \\
&=  \sum_{F \subseteq E} (-1)^{|F|} \sum_{f: V \rightarrow [q]}
\prod_{\{u,v\} \in F}  \mathbb{1}_{f(u) = f(v)}.
\end{align*}
The inner sum in the last expression is equal to $q^{k(F)}$, where $k(F)$ is the number of components of the graph $(V,F)$. The reason is that the product is $1$ for an assignment $f$  if and only if every edge of $F$ is monochromatic in $f$, which means that $f$ must assign a single colour to each component of $(V,F)$. There are precisely $q^{k(F)}$ ways of doing this. Thus
\begin{equation}
\label{eq:RCM}
\chi_G(q):=\sum_{F\subseteq E} q^{k(F)}(-1)^{|F|},
\end{equation}
and so we see that $\chi_G$ is indeed a polynomial (although there are easier ways of showing this) and has degree $|G|$.  

Our goal will be to prove the following zero-freeness result for the chromatic polynomial.
\begin{theorem}[\cite{FerProc,JPS13}]\label{thm:JPS}
Let $G$ be any graph. Then all the zeros of $\chi_G$ are contained in the disk of radius $6.91\Delta(G)$ centered at $0$ in the complex plane.
\end{theorem}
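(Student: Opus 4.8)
The plan is to realise the chromatic polynomial as a multivariate independence polynomial of a polymer model and then invoke a known zero-freeness criterion for such polynomials, following the polymer method outlined in the preceding subsection. Concretely, starting from the random cluster expansion \eqref{eq:RCM}, I would group the edge subsets $F\subseteq E$ by the collection of their connected components (\emph{polymers}): a polymer is a connected subgraph $\gamma$ on at least two vertices, with an associated weight obtained from summing $q^{1}(-1)^{|F|}$ over all $F$ spanning $\gamma$ with that vertex set (this weight is, up to sign and a factor of $q$, the value of a suitable local chromatic-type quantity of $\gamma$). Two polymers are \emph{compatible} if their vertex sets are disjoint, incompatible otherwise. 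One then checks that $\chi_G(q)/q^{|V|}$ equals the multivariate independence polynomial of the polymer graph (vertices = polymers, edges = incompatibility) evaluated at these weights. This is the standard ``cluster/polymer'' rewriting and the bookkeeping, while a little fiddly, is routine.

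Next I would quote the Fern\'andez--Procacci / Jackson--Procacci--Sokal sufficient condition for zero-freeness of such a polymer independence polynomial — a Koteck\'y--Preiss / cluster-expansion type inequality asserting that $Z\neq 0$ provided each polymer weight $w_\gamma$ satisfies a bound of the form $\sum_{\gamma \ni v} |w_\gamma|\, a^{|\gamma|} \le \text{(something)}$ for a suitable ``repulsion'' function, where the sum is over polymers containing a fixed vertex. The two ingredients needed to apply this are: (a) a bound on the polymer weights $|w_\gamma|$ in terms of $|q|$ and $|\gamma|$, which follows from elementary estimates on the number of spanning connected edge sets of $\gamma$ (controlled by $\Delta$); and (b) a bound on the number of polymers of a given size containing a fixed vertex in a graph of maximum degree $\Delta$, which is again a standard counting estimate (at most something like $(e\Delta)^{|\gamma|}$). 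Plugging these two bounds into the criterion and optimising the free parameter gives the condition $|q| > c\,\Delta$ for an explicit constant $c$; the constant $6.91$ is what comes out of optimising the estimates carefully.

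The main obstacle, and the step I would spend the most care on, is getting the two combinatorial bounds sharp enough that the resulting constant is as small as $6.91$ rather than something much larger. Naive estimates for the weight of a polymer and for the number of polymers through a vertex each lose a multiplicative constant, and the criterion is sensitive to their product; squeezing the constant down requires using the right variant of the Koteck\'y--Preiss condition (the Fern\'andez--Procacci improvement over the classical one) together with a careful choice of the auxiliary weighting function, exactly as in \cite{FerProc,JPS13}. A secondary subtlety is that the polymer weights are signed and depend on $q$, so one must be slightly careful that the zero-freeness criterion is being applied with absolute values of the weights and that the bound is uniform over all $G\in\mathcal{G}_\Delta$ rather than graph-dependent.

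Since this is meant only as an illustration of the polymer method, I would not reproduce the optimisation in full; I would state the weight bound and the polymer-counting bound, cite the precise form of the cluster-expansion criterion from \cite{FerProc,JPS13}, indicate how they combine, and refer the reader to those papers for the verification that the constant can be taken to be $6.91$.
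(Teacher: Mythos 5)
Your overall plan matches the paper's: rewrite $\chi_G$ via the random cluster expansion \eqref{eq:RCM} as a multivariate independence polynomial of a polymer model (Lemma~\ref{le:ChromIndep}), then invoke a Fern\'andez--Procacci-type zero-freeness criterion (Theorem~\ref{thm:GK}) after verifying the hypothesis with combinatorial estimates. That much is right.

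The one place where your sketch has a genuine gap is item (a), the weight bound. You propose to bound $|w_\gamma|$ via ``elementary estimates on the number of spanning connected edge sets of $\gamma$.'' Taken literally, this bounds
\[
\Big|\sum_{\substack{F\subseteq E(S)\\ (S,F)\ \text{connected}}}(-1)^{|F|}\Big|
\]
by the \emph{number} of connected spanning edge subsets of $G[S]$, which can be as large as $2^{\Theta(|S|\Delta)}$. Feeding that into the Koteck\'y--Preiss/Fern\'andez--Procacci condition would force $|q|$ to be exponential in $\Delta$, not linear, so you would never reach a bound of the form $c\,\Delta$, let alone $6.91\Delta$. The essential extra idea in the paper is a Penrose/Sokal-type alternating-sign cancellation: Lemma~\ref{lem:spanning trees} shows the alternating sum is bounded in absolute value by $\tau(G[S])$, the number of spanning \emph{trees}, which is only singly exponential in $|S|$ with base controlled by $\Delta$. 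The paper then packages both the weight bound and the ``how many polymers through a vertex'' count into a single tree generating function $T_{G,v}(x)$, which it bounds by a short induction. Without the cancellation step your two ingredients (a) and (b) cannot be combined to give a linear-in-$\Delta$ radius, so this is a real missing idea rather than an omitted optimisation.

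A minor point: you take polymers to be connected subgraphs, whereas the paper takes them to be vertex subsets $S$ with the weight $\lambda_S$ already summing over all connected $F\subseteq E(S)$; this merging is what makes the compatibility relation exactly ``disjoint vertex sets'' and the object exactly a multivariate independence polynomial. Your version is morally equivalent but slightly more awkward to feed into Theorem~\ref{thm:GK}.
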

It is likely that the constant $6.91$ can be improved, but it is not clear what the optimal value is likely to be; see \cite[Footnote 4]{RoyleSokal} for further discussion.
By the Taylor polynomial interpolation method, Theorem~\ref{thm:JPS} almost immediately implies an FPTAS for approximating $\chi_G(q)$ whenever $G \in \mathcal{G}_{\Delta}$ and $|q| \geq 6.92\Delta$. The trick is to apply the interpolation method to the polynomial $q^{|V|}\chi_G(1/q)$, which has no zeros in the disk of radius $\frac{1}{6.91\Delta}$.  From the  combinatorial perspective, this implies an FPTAS to count the number of proper $q$-colourings of any graph $G \in \mathcal{G}_{\Delta}$ whenever $q > 6.91 \Delta$. It is believed that there is an FPTAS for counting proper $q$-colourings whenever $q > \Delta$ and this is an active area of research.
By  proving a zero-freeness result for a different polynomial (the partition function of the Potts model) Liu, Sinclair, and Srivastava~\cite{liu2019correlation} have shown that there is an FPTAS when $q \geq 2\Delta$, and this is currently the state of the art.\footnote{There are improved bounds if we allow randomised algorithms based on the Markov chain Monte Carlo method~\cite{Vigoda,Improvedboundscoloring}.} 

We now sketch the proof of Theorem~\ref{thm:JPS}. As mentioned, we will need to work again with the (multivariate) independence polynomial and to make use of a suitable zero-freeness result for it.

\subsubsection{The chromatic polynomial as a multivariate independence polynomial}
Our first lemma shows how to express the chromatic polynomial of a graph $G$ as an evaluation of the multivariate independence polynomial of an associated graph. 
For this we need some notation. Let $G=(V,E)$ be a graph. Define a new graph $\Gamma$ whose vertices are subsets $S$ of $V$ of size at least two.
(In the context of the polymer method, these sets are called polymers.) Two of those sets $S,T$ are connected by an edge if and only if $S\cap T\neq \emptyset$. Notice that the graph $\Gamma$ is independent of the edges of $G$.

We now associate weights to vertices of $\Gamma$ as follows; these will depend on the edges of $G$ and on $q$ (the variable in the chromatic polynomial). For each vertex $S$ of $\Gamma$, i.e.\ $S \subseteq V$ with $|S| \geq 2$, define
\begin{equation}
\label{eq:weights}
\lambda_S:=\sum_{\substack{F\subseteq E(S)\\ \text{connected}}} (-1)^{|F|} q^{|S|-1}.
\end{equation}
Now the multivariate independence polynomial of $\Gamma$ with the (complex) vertex weights $\lambda_S$ is given by 
\begin{equation}\label{eq:def Z_G mult}
    Z_\Gamma((\lambda_S)) = \sum_{\substack{I \subseteq V(\Gamma) \\ \text{independent}}}\prod_{S\in I}\lambda_S.
\end{equation}

\begin{lemma}
\label{le:ChromIndep}
With notation as above we have
\[
q^{|V|}\chi_G(1/q)=Z_{\Gamma}((\lambda_S)).
\]
\end{lemma}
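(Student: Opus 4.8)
The plan is to start from the random cluster formula \eqref{eq:RCM}, substitute $q \mapsto 1/q$, multiply through by $q^{|V|}$, and then reorganise the resulting sum over edge subsets $F \subseteq E$ by grouping edges according to the connected components of the spanning subgraph $(V,F)$. Concretely, from \eqref{eq:RCM} we have
\[
q^{|V|}\chi_G(1/q) = q^{|V|}\sum_{F \subseteq E} (1/q)^{k(F)} (-1)^{|F|} = \sum_{F \subseteq E} q^{|V| - k(F)} (-1)^{|F|}.
\]
The key observation is that a subset $F \subseteq E$ is the same data as: a choice of a partition of (part of) $V$ into the vertex sets of the non-trivial components of $(V,F)$ — call these $S_1, \ldots, S_t$, each of size at least $2$ and pairwise disjoint — together with a choice, for each $S_j$, of a \emph{connected} edge set $F_j \subseteq E(S_j)$ spanning $S_j$. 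Isolated vertices of $(V,F)$ contribute nothing to $F$. Under this bijection, $|F| = \sum_j |F_j|$ and $|V| - k(F) = \sum_j (|S_j| - 1)$, since each non-trivial component $S_j$ reduces the component count by $|S_j| - 1$ relative to the all-isolated graph. Hence the sum factorises over the blocks: it becomes a sum over collections of pairwise disjoint sets $\{S_1, \ldots, S_t\}$ (including the empty collection, giving the term $1$) of $\prod_{j=1}^t \left( \sum_{F_j \subseteq E(S_j)\ \mathrm{connected}} (-1)^{|F_j|} q^{|S_j|-1} \right) = \prod_{j=1}^t \lambda_{S_j}$, using the definition \eqref{eq:weights}.

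The final step is to recognise this as $Z_\Gamma((\lambda_S))$. By construction of $\Gamma$, a collection of sets $\{S_1, \ldots, S_t\}$ with each $|S_j| \geq 2$ is an independent set in $\Gamma$ precisely when the $S_j$ are pairwise disjoint (adjacency in $\Gamma$ means non-empty intersection). So the sum over pairwise-disjoint collections is exactly the sum over independent sets $I \subseteq V(\Gamma)$ of $\prod_{S \in I} \lambda_S$, which is \eqref{eq:def Z_G mult}. One should note the empty collection corresponds to the empty independent set, contributing the constant term $1$ on both sides, consistent with the fact that $F = \emptyset$ gives $q^{|V| - k(\emptyset)}(-1)^0 = q^{|V|-|V|} = 1$.

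I expect the main obstacle to be stating the block-decomposition bijection cleanly: one must be careful that the $S_j$ are required to be exactly the vertex sets of the non-trivial components (not arbitrary disjoint sets carrying arbitrary edge sets), so that the correspondence $F \leftrightarrow (\{S_j\}, \{F_j\})$ is genuinely a bijection — this is what forces the $F_j$ to be connected and spanning on $S_j$. Once that is set up, the exponent bookkeeping ($|F| = \sum |F_j|$ and $|V| - k(F) = \sum (|S_j|-1)$) and the factorisation are routine. Everything else is just matching definitions.
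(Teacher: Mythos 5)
Your proposal is correct and follows the same route as the paper's proof: substitute $q \mapsto 1/q$ in the random-cluster formula \eqref{eq:RCM}, decompose each edge set $F$ according to the vertex sets of its non-trivial components, factorise using $|F| = \sum_j |F_j|$ and $|V| - k(F) = \sum_j(|S_j|-1)$, and recognise the result as $Z_\Gamma((\lambda_S))$. The only cosmetic difference is that you sum over unordered collections of disjoint sets rather than ordered tuples, which if anything makes the bookkeeping slightly cleaner.
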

\begin{proof}
We start by expanding the left-hand side using \eqref{eq:RCM}
\begin{align*}
q^{|V|}\chi_G(1/q)&=\sum_{F\subseteq E}(-1)^{|F|} q^{|V|-k(F)}
=\sum_{F\subseteq E}\prod_{C \text{ component } of F}(-1)^{|C|}q^{|V(C)|-1}.
\end{align*}
Next, we break up the sum over $F \subseteq E$ in terms of the component structure of $F$ as follows. We sum over all $F$ that have exactly $k$ connected components with vertex sets $S_1, \ldots, S_k$ and then we sum over all possible choices of $S_1, \ldots, S_k$ and all possible choices of $k$. In fact we can ignore the components that consist of a single vertex (and no edge) since they contribute a factor of $1$ to the product above.
In this way (after exchanging a sum and product) we obtain
\[
q^{|V|}\chi_G(1/q)
=\sum_{k\geq 0} \sum_{\substack{S_1,\ldots,S_k\subseteq V\\ S_i\cap S_j =\emptyset \text{ for } i\neq j \\|S_i|\geq 2}}\prod_{i=1}^k \sum_{\substack{F_i\subseteq E(S_i)\\ (S_i,F_i) \text{ connected}}}(-1)^{|F_i|} q^{|S_i|-1}.
\]
By construction any collection of sets $\{S_1,\ldots,S_k\}$ contributing to this sum forms an independent set of size $k$ in the graph $\Gamma$. The weights are constructed precisely so that the last expression is $Z_\Gamma((\lambda_S))$, as desired.
\end{proof}

\subsubsection{Zero-freeness conditions and their verification}
Here we present a result due to Biascot, F\'ernandez and Procacci~\cite{BFP} that provides useful conditions that guarantee that our multivariate independence polynomial (for graphs of the type $\Gamma$)
does not evaluate to zero. We will then verify these conditions for our situation. Let $G=(V,E)$ and $\Gamma$ be as before.
\begin{theorem}[\cite{BFP}] \label{thm:GK}
For any complex numbers $(\lambda_S)_{S \in V(\Gamma)}$ and any $a>1$, if,
 for each $v\in V$, it holds that
\begin{equation}\label{eq:GK condition}
\sum_{\substack{S\mid v\in S \\ |S| \geq 2} }|\lambda_S| a^{|S|}\leq a-1,
\end{equation}
then $Z_\Gamma((\lambda_S))\neq 0$. 
\end{theorem}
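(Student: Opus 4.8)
The plan is to prove Theorem~\ref{thm:GK} by induction on the number of polymers, closely mirroring the proof of Theorem~\ref{thm:shearer}: one sets up a recursion for a suitable ratio of (restricted) polymer partition functions and shows this ratio is trapped in a region of $\mathbb{C}$ that avoids $0$. For a set $\mathcal{A}\subseteq V(\Gamma)$ of polymers, write $Z_{\mathcal{A}}:=\sum_{I\subseteq\mathcal{A}\text{ independent in }\Gamma}\prod_{S\in I}\lambda_S$, so that $Z_{V(\Gamma)}=Z_\Gamma((\lambda_S))$ is the quantity of interest. For $v\in V$ let $\mathcal{S}_v(\mathcal{A}):=\{S\in\mathcal{A}:v\in S\}$ be the polymers of $\mathcal{A}$ through $v$, and for a polymer $S_0$ write $N_\Gamma[S_0]$ for its closed neighbourhood in $\Gamma$, namely $S_0$ together with every polymer meeting $S_0$. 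I would prove by induction on $|\mathcal{A}|$ the strengthened assertion that \emph{(a)} $Z_{\mathcal{A}}\neq 0$, and \emph{(b)} $\bigl|Z_{\mathcal{A}\setminus\mathcal{S}_v(\mathcal{A})}/Z_{\mathcal{A}}\bigr|\leq a$ for every $v\in V$. The base case $\mathcal{A}=\emptyset$ is immediate, since then $Z_{\mathcal{A}}=1$ and $a>1$.

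For the inductive step the key point — and the only place where the special structure of $\Gamma$ enters — is to strip away \emph{all} polymers through one fixed ground vertex simultaneously. Since any two polymers containing $v$ intersect (in $v$), the set $\mathcal{S}_v(\mathcal{A})$ is a clique of $\Gamma$, so an independent set of $\Gamma$ contains at most one member of it. Splitting the sum defining $Z_{\mathcal{A}}$ according to whether the independent set meets $\mathcal{S}_v(\mathcal{A})$, and if so in which polymer, yields the identity
\[
Z_{\mathcal{A}} \;=\; Z_{\mathcal{A}\setminus\mathcal{S}_v(\mathcal{A})} \;+\; \sum_{S_0\in\mathcal{S}_v(\mathcal{A})}\lambda_{S_0}\,Z_{\mathcal{A}\setminus N_\Gamma[S_0]},
\]
which is the polymer analogue of \eqref{eq:fundamental id}; here one uses that $\mathcal{S}_v(\mathcal{A})\subseteq N_\Gamma[S_0]$ for each $S_0\in\mathcal{S}_v(\mathcal{A})$, so each summand is genuinely of this shape. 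Now choose $v$ to lie in some polymer of $\mathcal{A}$ (possible whenever $\mathcal{A}\neq\emptyset$). Dividing by $Z_{\mathcal{A}\setminus\mathcal{S}_v(\mathcal{A})}$, which is non-zero by the induction hypothesis applied to the strictly smaller set $\mathcal{A}\setminus\mathcal{S}_v(\mathcal{A})$, gives
\[
\frac{Z_{\mathcal{A}}}{Z_{\mathcal{A}\setminus\mathcal{S}_v(\mathcal{A})}} \;=\; 1 + \sum_{S_0\in\mathcal{S}_v(\mathcal{A})}\lambda_{S_0}\,\frac{Z_{\mathcal{A}\setminus N_\Gamma[S_0]}}{Z_{\mathcal{A}\setminus\mathcal{S}_v(\mathcal{A})}}.
\]

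It remains to bound the error term. For fixed $S_0\in\mathcal{S}_v(\mathcal{A})$, the set $\mathcal{A}\setminus N_\Gamma[S_0]$ is obtained from $\mathcal{A}\setminus\mathcal{S}_v(\mathcal{A})$ by deleting all polymers meeting $S_0\setminus\{v\}$; carrying out this deletion one ground vertex of $S_0\setminus\{v\}$ at a time and applying part (b) of the induction hypothesis to each intermediate polymer set (all strictly smaller than $\mathcal{A}$) contributes a factor of at most $a$ per vertex, so $\bigl|Z_{\mathcal{A}\setminus N_\Gamma[S_0]}/Z_{\mathcal{A}\setminus\mathcal{S}_v(\mathcal{A})}\bigr|\leq a^{|S_0|-1}$. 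Consequently the error term has modulus at most
\[
\sum_{S_0\in\mathcal{S}_v(\mathcal{A})}|\lambda_{S_0}|\,a^{|S_0|-1} \;=\; \frac{1}{a}\sum_{S_0\in\mathcal{S}_v(\mathcal{A})}|\lambda_{S_0}|\,a^{|S_0|} \;\leq\; \frac{a-1}{a} \;=\; 1-\frac1a,
\]
the middle inequality being exactly hypothesis \eqref{eq:GK condition} at $v$. Hence $Z_{\mathcal{A}}/Z_{\mathcal{A}\setminus\mathcal{S}_v(\mathcal{A})}$ lies in the disk of radius $1-\tfrac1a<1$ about $1$; in particular it is non-zero (giving (a)) and has modulus at least $\tfrac1a$, which rearranges to (b) for this $v$. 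For a vertex $v$ lying in no polymer of $\mathcal{A}$, (b) is trivial, as the ratio equals $1\leq a$. This completes the induction, and specialising to $\mathcal{A}=V(\Gamma)$ gives $Z_\Gamma((\lambda_S))\neq 0$.

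I expect the main obstacle to be identifying the right induction hypothesis: one must (i) carry the ratio bound (b) alongside zero-freeness, and (ii) peel off all polymers through a single ground vertex at once rather than one polymer at a time. Peeling a single polymer produces a correction term that can be as large as $a-1>1$, which need not keep the ratio away from $0$; only after summing over all polymers through $v$ and invoking the full strength of \eqref{eq:GK condition} does the perturbation of $1$ drop below $1$ in modulus. Everything else — checking the splitting identity, that $\mathcal{S}_v(\mathcal{A})$ is a clique, and that the telescoping always stays within strictly smaller polymer sets so the induction hypothesis applies — is routine bookkeeping.
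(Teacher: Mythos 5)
Your proof is correct. The paper does not actually carry out a proof of Theorem~\ref{thm:GK} — it only remarks that the result ``can be proved along the same lines as the proof of Theorem~\ref{thm:shearer}'' and defers to \cite[Proposition 3.1]{BFP} — and what you have written is precisely the execution of that suggestion: an induction on subsets of polymers carrying both zero-freeness and a ratio bound, with the splitting identity for $Z_{\mathcal{A}}$ playing the role of \eqref{eq:fundamental id} and the telescoping product of intermediate ratios playing the role of \eqref{eq:ratio recurse}. You correctly identify the one non-obvious twist relative to the proof of Theorem~\ref{thm:shearer}: because the polymers through a fixed ground vertex $v$ form a clique of $\Gamma$, one must strip all of $\mathcal{S}_v(\mathcal{A})$ simultaneously (summing the perturbation over that clique so that condition~\eqref{eq:GK condition} can be invoked in full), rather than one polymer at a time. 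One small expository point: after fixing $v$ you only state (b) ``for this $v$'' and for vertices lying in no polymer; it would be cleaner to say explicitly that the same computation applies verbatim with $v$ replaced by any ground vertex lying in some polymer of $\mathcal{A}$, which is what completes the verification of (b) for all of $V$.
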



The theorem can be proved along the same lines as the proof of Theorem~\ref{thm:shearer}. See~\cite[Proposition 3.1]{BFP} for a proof along these lines and a discussion of how this condition compares with other similar conditions including the Kot\'ecky-Preis conditions~\cite{KP86} and Dobrushin's conditions~\cite{Dobrushin96}.

To verify the conditions in Theorem~\ref{thm:GK}, we need a bound on the weights $\lambda_S$ given in \eqref{eq:weights}.
Our first step in this direction is to get rid of the `alternating signs' in \eqref{eq:weights}. The lemma below can for example be proved using well-known properties of the Tutte polynomial; see e.g.~\cite{ellis2022handbook} for these properties and see~\cite{Penrose,Sokalzeros} for a direct proof.
\begin{lemma}\label{lem:spanning trees}
Let $H$ be a connected graph and denote by $\tau(H)$ the number of spanning trees in $H$. Then
\[
\Big|\sum_{\substack{F\subseteq E(H)\\ (V(H),F) \text{ connected}}}(-1)^{|F|}\Big|\leq \tau(H).
\]
\end{lemma}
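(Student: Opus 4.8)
The plan is to interpret the alternating sum as an evaluation of the Tutte polynomial of $H$ and then to bound that evaluation by a sum of nonnegative contributions indexed by spanning trees. Recall that for a connected graph $H=(W,E(H))$ the Tutte polynomial can be written in the ``subgraph expansion''
\[
T_H(x,y)=\sum_{F\subseteq E(H)}(x-1)^{k(F)-1}(y-1)^{|F|-|W|+k(F)},
\]
where $k(F)$ is the number of components of $(W,F)$. First I would restrict this to connected spanning subgraphs: for such $F$ we have $k(F)=1$, so the factor $(x-1)^{k(F)-1}=1$ and we are left with $\sum_{F\ \mathrm{conn.}}(y-1)^{|F|-|W|+1}$. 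Setting $y=0$ (and $x$ arbitrary, say $x=2$, since it drops out) gives $T_H(2,0)=\sum_{F\ \mathrm{conn.}}(-1)^{|F|-|W|+1}$, so the quantity inside the absolute value in the statement equals $\pm\, T_H(2,0)$. Hence it suffices to show $|T_H(2,0)|\le \tau(H)=T_H(1,1)$.

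Next I would invoke the spanning-tree (internal/external activity) expansion of the Tutte polynomial: fixing an arbitrary linear order on $E(H)$,
\[
T_H(x,y)=\sum_{\substack{S\subseteq E(H)\\ S\ \text{spanning tree}}} x^{\,\mathrm{ia}(S)}\, y^{\,\mathrm{ea}(S)},
\]
where $\mathrm{ia}(S)$ and $\mathrm{ea}(S)$ are the numbers of internally and externally active edges with respect to $S$. All activities are nonnegative integers, and each spanning tree contributes a monomial with nonnegative coefficient $1$. Evaluating at $(x,y)=(2,0)$, each spanning tree contributes $2^{\mathrm{ia}(S)}\cdot 0^{\mathrm{ea}(S)}$, which is $0$ if $\mathrm{ea}(S)>0$ and $2^{\mathrm{ia}(S)}\ge 1$ if $\mathrm{ea}(S)=0$ — in any case a nonnegative real number. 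Therefore $T_H(2,0)\ge 0$ and in fact $T_H(2,0)=\sum_{S:\ \mathrm{ea}(S)=0}2^{\mathrm{ia}(S)}$. To finish, I would bound this by $\tau(H)=T_H(1,1)=\sum_{S}1$; this is \emph{not} immediate because of the factor $2^{\mathrm{ia}(S)}$, so the cleaner route is to instead evaluate at $(x,y)=(1,0)$: the identity $\sum_{F\ \mathrm{conn.}}(-1)^{|F|-|W|+1}$ does not depend on $x$, so we may as well compute $T_H(1,0)=\sum_{S:\ \mathrm{ea}(S)=0}1^{\mathrm{ia}(S)}=\#\{\text{spanning trees with no externally active edge}\}\le \tau(H)$, and since $T_H(1,0)\ge 0$ we get $\big|\sum_{F\ \mathrm{conn.}}(-1)^{|F|}\big|=\big|(-1)^{|W|-1}T_H(1,0)\big|=T_H(1,0)\le\tau(H)$, as desired.

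The main obstacle is bookkeeping rather than conceptual: one must (a) match the sign conventions so that the stated alternating sum is $(-1)^{|W|-1}$ times the correct Tutte evaluation, and (b) make sure one uses an evaluation point at which the spanning-tree expansion has manifestly nonnegative terms bounded by $1$ — which is why I switch from $(2,0)$ to $(1,0)$ in the last step. An alternative that avoids the Tutte machinery entirely is a direct combinatorial/inclusion–exclusion argument (deletion–contraction on a single edge, inducting on $|E(H)|$): for a bridge $e$ the sum factors, and for a non-bridge $e$ one splits connected spanning subgraphs according to whether they contain $e$, obtaining a recursion whose absolute value is controlled by $\tau(H-e)+\tau(H/e)=\tau(H)$; this is the ``direct proof'' alluded to in the cited references \cite{Penrose,Sokalzeros}, and I would present it as the fallback if the Tutte-polynomial prerequisites are deemed too heavy for this survey.
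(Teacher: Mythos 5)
Your core idea — identify the alternating sum with an evaluation of the Tutte polynomial and bound it via the internal/external activity expansion over spanning trees — is the route the paper itself points to (it cites the Tutte-polynomial properties and a direct proof but writes out neither), and your endgame is correct: $T_H(1,0)$ counts spanning trees with no externally active edge, hence is a nonnegative integer, and $T_H(1,0)\le T_H(1,1)=\tau(H)$; moreover $T_H(1,0)=(-1)^{|V(H)|-1}\sum_{F\ \mathrm{conn.}}(-1)^{|F|}$, which gives the lemma.

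However, the intermediate reasoning contains a genuine error, and the justification you give for passing from $(2,0)$ to $(1,0)$ does not hold up. You claim that after ``restricting'' the Whitney rank expansion to connected spanning subgraphs you may take $x$ arbitrary and get $T_H(2,0)=\sum_{F\ \mathrm{conn.}}(-1)^{|F|-|W|+1}$, ``since $x$ drops out.'' This is false: the rank expansion of $T_H(x,y)$ sums over \emph{all} $F\subseteq E(H)$, and at $x=2$ the disconnected $F$ still contribute with factor $(x-1)^{k(F)-1}=1$, so they do not go away. The restriction to connected $F$ is not a choice you make but is forced precisely at $x=1$, where $(x-1)^{k(F)-1}=0^{k(F)-1}$ annihilates every term with $k(F)>1$ (and equals $1$ when $k(F)=1$). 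Concretely, for $H=C_3$ one has $T_{C_3}(x,y)=x^2+x+y$, so $T_{C_3}(2,0)=6$ while the alternating sum over connected spanning subgraphs equals $2=T_{C_3}(1,0)$ and $\tau(C_3)=3$; thus your intermediate target ``$|T_H(2,0)|\le\tau(H)$'' is false, not merely hard to prove. Your later remark that ``the identity $\sum_{F\ \mathrm{conn.}}(-1)^{|F|-|W|+1}$ does not depend on $x$'' is true but beside the point: what fails is the assertion that this number equals $T_H(x,0)$ for $x\ne 1$. If you instead start from $T_H(1,0)$ with the correct justification (the factor $0^{k(F)-1}$ selects connected $F$), the remainder of your argument — nonnegativity and the bound by $T_H(1,1)$ via the activity expansion — goes through cleanly. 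Your deletion--contraction fallback is a reasonable alternative and is indeed the ``direct proof'' the paper alludes to.
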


For a graph $G=(V,E)$, a vertex $v\in V$, and a variable $x$ we define the \emph{tree generating function} by
\[
T_{G,v}(x):=\sum_{\substack{T\subseteq E(G)\\ (V(T),T) \text{ is a tree, } v\in V(T)}} x^{|T|}.
\]
We can now bound $\sum_{S\mid v\in S, |S| \geq 2} |\lambda_S|a^{|S|}$ in terms of the tree generating function as follows:
\begin{align}
\sum_{S\mid v\in S, |S| \geq 2} |\lambda_S|a^{|S|} 
&= \sum_{S\mid v\in S, |S| \geq 2} \Bigg| \sum_{\substack{F\subseteq E(S)\\ \text{connected}}} (-1)^{|F|} q^{|S|-1} \Bigg| a^{|S|} \nonumber \\
&\leq \sum_{S\mid v\in S, |S| \geq 2} \Bigg| \sum_{\substack{F\subseteq E(S)\\ \text{connected}}} (-1)^{|F|} \Bigg| |q|^{|S|-1}  a^{|S|} \nonumber \\
&\leq \sum_{S\mid v\in S, |S| \geq 2} \tau(G[S]) |q|^{|S|-1}  a^{|S|} \nonumber \\
&= aT_{G,v}(a|q|) - a. \label{eq:weightbound}
\end{align}

The next lemma shows how to bound the tree generating function. The proof we give is  slightly shorter than the proof given in~\cite{JPS13}, and is new as far as we know.
\begin{lemma}[\cite{HR21}]
Let $G=(V,E)$ be a graph of maximum degree at most $\Delta\geq 1$ and let $v\in V$.
Fix any $\alpha>1$.
Then 
\[
T_{G,v}\left(\tfrac{\ln \alpha}{\alpha \Delta}\right)\leq \alpha.
\]
\end{lemma}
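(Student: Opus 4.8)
The plan is to bound the tree generating function $T_{G,v}(x)$ by comparing it to the analogous quantity on an infinite $\Delta$-regular rooted tree, where the count becomes tractable, and then to optimize the choice of the point $x$. First I would recall that every tree $T \subseteq E(G)$ containing $v$, together with a choice of root $v$, embeds into the infinite rooted tree $\mathbb{T}$ in which the root has $\Delta$ children and every other vertex has $\Delta - 1$ children: indeed, ordering the edges of $T$ by distance from $v$, each new vertex of $T$ is attached to an already-placed vertex, which has at most $\Delta - 1$ remaining available slots (or $\Delta$ if it is $v$). Hence $T_{G,v}(x) \leq F(x)$, where $F(x) := \sum_{T} x^{|T|}$ is the generating function counting finite subtrees of $\mathbb{T}$ containing the root, weighted by number of edges. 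This reduction is the combinatorial heart of the argument.

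Next I would set up the recursion for $F$. Write $f(x)$ for the analogous generating function rooted at a non-root vertex of $\mathbb{T}$ (so its subtree-of-descendants branches $\Delta - 1$ ways). A subtree at such a vertex either stops there or picks, for each of the $\Delta - 1$ children, an edge-or-not followed recursively by a subtree below; this gives $f(x) = (1 + x f(x))^{\Delta - 1}$, and similarly $F(x) = (1 + x f(x))^{\Delta}$. So it suffices to show that at $x = \tfrac{\ln \alpha}{\alpha \Delta}$ the fixed-point equation $f = (1 + xf)^{\Delta - 1}$ has a solution with $1 + x f(x) \leq \alpha^{1/\Delta}$, since then $F(x) = (1 + xf(x))^{\Delta} \leq \alpha$. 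Equivalently, writing $u := 1 + x f$, we want to find $u \in (1, \alpha^{1/\Delta}]$ with $u = 1 + x u^{\Delta - 1}$.

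The cleanest route from here is a direct substitution check rather than a fixed-point/monotonicity argument. Setting $u = \alpha^{1/\Delta}$ and $x = \tfrac{\ln\alpha}{\alpha\Delta}$, the claim $F(x) \le \alpha$ will follow once we verify $x u^{\Delta - 1} \le u - 1$, i.e.
\[
\frac{\ln \alpha}{\alpha \Delta}\,\alpha^{(\Delta-1)/\Delta} \;=\; \frac{\ln \alpha}{\Delta\,\alpha^{1/\Delta}} \;\le\; \alpha^{1/\Delta} - 1.
\]
Writing $\alpha^{1/\Delta} = e^{t}$ with $t = \tfrac{\ln\alpha}{\Delta} > 0$, this reads $t e^{-t} \le e^{t} - 1$, i.e. $t \le e^{t}(e^{t}-1) = e^{2t} - e^{t}$, which holds for all $t > 0$ since $e^{2t} - e^{t} - t \ge 1 + 2t + 2t^2 - 1 - t - t - t = 2t^2 > 0$ by the series expansion (using $e^{2t} \ge 1 + 2t + 2t^2$ and $e^{t} \le 1 + t + \tfrac{t^2}{2} + \cdots$ — more carefully, $e^{2t}-e^t-t = \sum_{k\ge1}\tfrac{(2^k-1)t^k}{k!} - t = t^2 + \sum_{k\ge 3}\tfrac{(2^k-1)t^k}{k!} \ge 0$). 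Since $u^{\Delta-1}$ is increasing in $u$, the map $u \mapsto 1 + x u^{\Delta-1}$ sends $[1,\alpha^{1/\Delta}]$ into itself (it maps $1$ to $1+x>1$ and $\alpha^{1/\Delta}$ to something $\le \alpha^{1/\Delta}$), so by continuity it has a fixed point $u^\star$ there, and this $u^\star$ is exactly $1 + x f(x)$ for the (minimal non-negative) solution $f(x)$ of the recursion that $F$ actually equals. Therefore $F(x) = (u^\star)^{\Delta} \le \alpha$, completing the proof.

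The main obstacle I anticipate is not any single estimate but making the tree-embedding reduction fully rigorous: one must be careful that the infinite-tree generating function $f(x)$ converges at the relevant $x$ (it does, as $x$ is below the radius of convergence, which one can see from the same fixed-point analysis), that the finite subtree counts of $G$ are genuinely dominated term-by-term by those of $\mathbb{T}$ (an injective-embedding argument, respecting the breadth-first order), and that the quantity $F$ really equals the value $(u^\star)^\Delta$ at the relevant fixed point rather than some other branch. Once those bookkeeping points are pinned down, the inequality $t e^{-t} \le e^t - 1$ is elementary and the rest is immediate.
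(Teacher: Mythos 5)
Your proof is correct but follows a genuinely different route from the paper's. The paper's argument is a direct induction on $|V(G)|$: delete $v$ from a tree $T$ containing it, observe that the remainder splits into trees rooted at the $T$-neighbours $S$ of $v$, so that $T_{G,v}(c)\le\sum_{S\subseteq N_G(v)}c^{|S|}\prod_{s\in S}T_{G-v,s}(c)$; applying the inductive bound $T_{G-v,s}(c)\le\alpha$ gives $T_{G,v}(c)\le(1+c\alpha)^{\Delta}=\bigl(1+\tfrac{\ln\alpha}{\Delta}\bigr)^{\Delta}\le\alpha$, closing the induction in a few lines with no infinite objects or convergence questions in sight. You instead compare $T_{G,v}$ to the subtree generating function $F$ of the infinite $\Delta$-regular rooted tree, read off the fixed-point equations $F=(1+xf)^{\Delta}$ and $f=(1+xf)^{\Delta-1}$, and certify that the map $u\mapsto 1+xu^{\Delta-1}$ sends $\alpha^{1/\Delta}$ below itself, which you correctly reduce to the elementary inequality $te^{-t}\le e^{t}-1$. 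The price of this route, as you rightly flag, is that one must pin down (i) the injective embedding of rooted $G$-trees into the infinite tree (fix an ordering of $N_G(w)$ for each $w$ and send the $j$-th available neighbour of a vertex to the $j$-th child of its image), (ii) convergence of the series $f(x)$ at the relevant point, and (iii) the identification of $1+xf(x)$ with the \emph{minimal} fixed point on $[1,\infty)$ (for instance by monotone iteration on the depth-$n$ truncations). These are standard but non-trivial bookkeeping steps which the paper's direct induction avoids entirely; on the other hand your version makes explicit that the extremal object is the regular tree, which the paper leaves implicit. One small observation: the paper bounds $|N_G(v)|$ by $\Delta$ at every level of the induction, so it implicitly compares against the full $\Delta$-ary tree rather than your slightly smaller $(\Delta,\Delta-1)$-tree; both comparisons close at the same threshold $c=\tfrac{\ln\alpha}{\alpha\Delta}$.
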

\begin{proof}
The proof is by induction on the number of vertices of $G$. If $|V| = 1$, the statement is clearly true.
Next assume that $|V|\geq 2$.
Given a tree $T$ such that $v\in V(T)$ let $S$ be the set of neighbours of $v$ in $V(T)$.
After removing $v$ from $T$, the tree decomposes into the disjoint union of $|S|$ trees, each containing a unique vertex from $S$.
Therefore, writing $c=\tfrac{\ln \alpha}{\alpha \Delta}$, we have
\[
T_{G,v}(c)\leq \sum_{S\subseteq N_G(v)} c^{|S|}\prod_{s\in S}T_{G-v,s}(c),
\]
which by induction is bounded by
\[
\sum_{S\subseteq N_G(V)} (c\alpha)^{|S|} \leq (1+(\ln \alpha)/\Delta)^\Delta\leq e^{\ln\alpha}=\alpha.
\]
This finishes the proof.
\end{proof}

 
We now combine all our ingredients to finish the proof of Theorem~\ref{thm:JPS}. Fix $\Delta\geq 2$. For $a>1$ to be determined, define $\alpha=\alpha(a)=2-1/a$.
Then if 
\[
|q|\leq \frac{\ln\alpha}{a \alpha \Delta}=\frac{\ln(2-1/a)}{(2a-1)\Delta},
\]
we have $\chi_G(1/q)\neq0$ for any graph of maximum degree at most $\Delta$.
Indeed, for such a value of $q\neq 0$ we have $|aq|\leq \tfrac{\ln \alpha}{\alpha \Delta}$ and therefore by \eqref{eq:weightbound} and the previous lemma
\[
\sum_{S\mid v\in S, |S| \geq 2} |\lambda_S|a^{|S|} \leq a(T_{G,v}(a|q|)-1)\leq a(\alpha-1)=a(1-1/a)=a-1,
\]
and so by Theorem~\ref{thm:GK} and Lemma~\ref{le:ChromIndep} we have $\chi_G(1/q)\neq 0$.
In other words if $|q|\geq \Delta\tfrac{2a-1}{\ln(2-1/a)}$ we have $\chi_G(q)\neq0$.
One can determine
\[
\min_{a>1}\frac{2a-1}{\ln(2-1/a)}<6.91,
\]
where the minimum is attained at $a\cong 1.588$.
This finishes the proof sketch of Theorem~\ref{thm:JPS}.

\subsubsection{Recipe and relation to cluster expansion}
The steps we took to prove Theroem~\ref{thm:JPS} suggest a `recipe' for proving absence of zeros using the polymer approach:
\begin{itemize}
\item[$\bullet$] Express the graph polynomial as an evaluation of the  multivariate independence polynomial of an associated graph.
\item[$\bullet$] Use the conditions from Theorem~\ref{thm:GK} (or other conditions) that guarantee the evaluation is nonzero.
\item[$\bullet$] Verify these conditions using combinatorial arguments.
\end{itemize}

Most combinatorial applications of this `recipe' include  various extensions and variations of the chromatic polynomial. See~\cite{Sokalzeros,FerProc, DongKohreal,JS09,JPS13,Dongjonespol,CsikFren,CDKPR} for some examples in this direction.

From a statistical physics perspective both Theorem~\ref{thm:JPS} and Theorem~\ref{thm:shearer} are statements about so-called high temperature models (in the case of Theorem~\ref{thm:shearer}, high temperature means small values of $\lambda$ for the independence polynomial).
Surprisingly, for some restricted families of graphs, the `recipe' above can also sometimes be used at low temperature (see e.g.~\cite{BorgsImbrie,FVbook} for this in statistical physics). For example, it has been used in combination with the interpolation method to design efficient approximation algorithms to approximate the independence polynomial at large $\lambda$ on certain subgraphs of the integer lattice $\mathbb{Z}^d$~\cite{HPR21}, but also on bipartite expander graphs~\cite{JKP20}. 
In fact,~\cite{JKP20} slightly modified the approach from~\cite{HPR21}. 
The idea is to use conditions like those in Theorem~\ref{thm:GK} to show absolute convergence of the \emph{cluster expansion}, a formal power series of the logarithm of $Z_\Gamma((\lambda_S))$, and to bound the remainder after truncating it at a suitable depth.
This avoids the use of the interpolation method and may occasionally lead to faster algorithms, but other than that is quite similar in spirit. See~\cite{BHHPTpottsall,CanPer,LLLM,helmuth2020finite,carlson2020efficient,MannHelmuth,GGS21,friedrich2020polymer,helmuth2022efficient,holantcomplexplane} for some results inspired by and based on this.

\section{Concluding remarks}
\label{se:conc}
We have shown how absence of zeros allows one to design efficient algorithms to approximately compute evaluations of graph polynomials using Barvinok's interpolation method. A key part of this method is establishing absence of zeros for the graph polynomials in question. 
A few natural questions that remain are: how do other approaches for approximate counting relate to absence of zeros, and what does presence of zeros mean for the possibility to design efficient approximation algorithms.
In this section we will briefly address these two questions pointing the interested reader to the relevant literature.

\subsection{Absence of zeros and other algorithm approaches}
As mentioned in the introduction there are two other (and older) approaches for designing approximation algorithms to compute evaluations of graph polynomials: a Markov chain based sampling approach and the method of correlation decay.
We will not discuss the workings of these approaches here, but we mention how these approaches relate to the interpolation method, or rather, how they relate to absence of zeros.

Recently it was shown that a standard technique for proving decay of correlations can be transformed to prove absence of zeros near the real axis~\cite{contraction,liu2019correlation}.
In the other direction, some results appeared indicating that absence of zeros can be used to establish some form of decay of correlations~\cite{gamarnik2022correlation,regts2021absence}.
Perhaps more surprisingly, in~\cite{anari,vigodastability} it was shown that if a multivariate version of the polynomial has no zeros near the positive real axis, then the associated Glauber dynamics (a local Markov chain often used in approximate counting and sampling) mixes rapidly.
These results indicate that, while absence of complex zeros is vital for the interpolation method, it also plays a key role (albeit in disguise) in these two other approaches for approximate counting.

\subsection{Presence of zeros}
In this section we discuss how presence of zeros is related to hardness of approximation.
We will again specialise the discussion to the independence polynomial and give some references to results on other polynomials at the end of this section.
In what follows we shall see that presence of zeros implies hardness of approximating the independence polynomial.

Let us first state the precise algorithmic problem in question.
Let $\lambda\in \mathbb{Q}[i]$ (the set of complex numbers whose real and imaginary parts are both rational) and let $\Delta\in \mathbb{N}$. Consider the following computational problem.
\begin{itemize}
\item[\emph{Name}] \#Hard-CoreNorm($\lambda,\Delta)$
\item[\emph{Input}] A graph $G$ of maximum degree at most $\Delta$.
\item[\emph{Output}] If $Z_G(\lambda)\neq0$ the algorithm must output a rational number $N$ such that $N/1.001\leq |Z_G(\lambda)|\leq 1.001N$. If $Z_G(\lambda)=0$ the algorithm may output any rational number.
\end{itemize}
It is easy to show that replacing the constant $1.001$ by any other constant $C>1$ does not change the complexity of the problem.\footnote{ An algorithm that solves the problem above in polynomial time can also be used to solve the problem with $1.001$ replaced by $(1.001)^2$ by running the original algorithm on the disjoint union of two copies of the graph.}


The typical notion of hardness one considers for computational counting problems is  \#P-completeness/hardness. We do not introduce the notion formally, but wish to impress only that one does not expect a polynomial-time algorithm for a \#P-complete counting problem (just as one does not expect a polynomial-time algorithm for an NP-complete problem). For example, the problem of exactly counting the number of independent sets of a graph of maximum degree $\Delta$ is known to be \#P-complete~\cite{Roth,Vadhan,DyerGreenhill} for any $\Delta \geq 3$. 


Returning to the problem \#Hard-CoreNorm($\lambda,\Delta$), we define
 the sets
\begin{align*}
\mathcal{P}_\Delta&=\{\lambda\in    \mathbb{Q}[i]\mid   \text{ \#Hard-CoreNorm($\lambda,\Delta$) is \#P-hard}\},
\\
\mathcal{Z}_\Delta&=\{\lambda\in\mathbb{C}\mid Z_G(\lambda)=0 \text{ for some graph }G\in \mathcal{G}_\Delta\}.
\end{align*}
Building on~\cite{BGGS20}, it was shown in~\cite{de2021zeros} that the closure of the set $\mathcal{Z}_\Delta$ is contained in the closure of the set $\mathcal{P}_\Delta$, meaning that arbitrarily close to any zero $\lambda\in \mathcal{Z}_\Delta$ there exists a parameter $\lambda'\in \mathbb{Q}[i]$ such that \#Hard-CoreNorm($\lambda',\Delta)$ is \#P-hard. (A similar result holds if instead of approximating the norm one wishes to approximate the argument of $Z_G(\lambda)$.)
Recall that Theorem~\ref{thm:shearer} gives a zero-free region for the independence polynomial that contains the point $0$. If one can show that the \emph{maximal} zero-free region of the independence polynomial for bounded degree graphs is connected, then this would result in an essentially complete understanding of the hardness of approximating the independence polynomial at complex parameters on bounded degree graphs~\cite{de2021zeros}. 
We remark that quite recently it was shown that in the $\Delta\to \infty$ limit this is true~\cite{bencs2021limit}, but unfortunately this does not give us information for any finite $\Delta$ yet.

For some models/polynomials such as the matching polynomial~\cite{BGGS21} and the ferromagnetic Ising model (as a function of the external field)~\cite{BGPR22} we know that absence/presence of zeros on bounded degree graphs exactly corresponds to easiness/ hardness of approximation. For others such as the Ising model (as a function of edge interaction)~\cite{galanis2021complexity} a partial correspondence has been established. 
This suggests a program of study to understand this connection for more models and also to understand the phenomenon in general.

The interpolation method is clearly a powerful technique for establishing fast approximation algorithms to evaluate graph polynomials at complex parameters, but more than that, it often seems to capture the dichotomy between parameters where approximate computation is easy and hard.




\end{document}